\newcommand{\blind}{0}
\newcommand{\appendixpagenumbering}{
  \break
  \pagenumbering{arabic}
}
\theoremstyle{plain}
\newtheorem{theorem}{Theorem} 
\newtheorem*{remark}{Remark} 
\newtheorem{lemma}[theorem]{Lemma}
\newtheorem{definition}{Definition}
\theoremstyle{remark}
\newtheorem*{example}{Example}
\begin{document}

\def\spacingset#1{\renewcommand{\baselinestretch}%
{#1}\small\normalsize} \spacingset{1}


\makeatletter
\renewcommand\@date{{%
  \vspace{-\baselineskip}%
  \large\centering
  \begin{tabular}{@{}c@{}}
    Xinyu Zhang\textsuperscript{1} \\
    \normalsize xinyu$\_$zhang@ncsu.edu
  \end{tabular}%
  \quad and\quad
  \begin{tabular}{@{}c@{}}
    Sujit Ghosh\textsuperscript{1} \\
    \normalsize sujit.ghosh@ncsu.edu
  \end{tabular}

  \bigskip

  \textsuperscript{1}Department of Statistics, \\ NC State University

  \bigskip

}}
\makeatother

\if0\blind
{
  \title{\bf PaEBack: Pareto-Efficient Backsubsampling for Time Series Data}
  \maketitle

} \fi

\if1\blind
{
  \bigskip
  \bigskip
  \bigskip
  \begin{center}
    {\LARGE\bf PaEBack: Pareto-Efficient Backsubsampling for Time Series Data}
\end{center}
  \medskip
} \fi



\begin{abstract}
Time series forecasting has been a quintessential topic in data science, but traditionally, forecasting models have relied on extensive historical data. In this paper, we address a practical question: How much recent historical data is required to attain a targeted percentage of statistical prediction efficiency compared to the full time series? We propose the Pareto-Efficient Backsubsampling (PaEBack) method to estimate the percentage of the most recent data needed to achieve the desired level of prediction accuracy. We provide a theoretical justification based on asymptotic prediction theory for the AutoRegressive (AR) models.  In particular, through several numerical illustrations, we show the application of the PaEBack for some recently developed machine learning forecasting methods even when the models might be misspecified. The main conclusion is that only a fraction of the most recent historical data provides near-optimal or even better relative predictive accuracy for a broad class of forecasting methods.

\end{abstract}

\noindent%
{\it Keywords:}  ARIMA models; Forecasting practice; GARCH models; Machine learning; Nonlinear time series; Short-term forecasts.


\newpage
\spacingset{1.5} 

\section{Introduction}
\label{sec:intro}

Time series analysis has been a very matured area of research in a wide variety of scientific and social applications. For example, in finance and econometric research, securities and stocks often call for short and long term forecast values. In medical fields and healthcare research, electrocardiogram (ECG) forecasting is often of interest, and more recent years, it was almost essential to obtain forecast values of COVID-19 counts. Forecasting methods within time series analysis have progressed enormously in the past several decades using classical (stationary) time series models and (often non-stationary) Machine Learning (ML) models. We present a very brief overview of some of these models and associated methods of inference before we introduce our proposed approach to best utilize the historical data.

Traditional time series analysis can be traced back to the development of the exponential smoothing state space method (ETS) proposed by \cite{winters1960forecasting} as well as the celebrated autoregressive integrated moving average (ARIMA) models made available in the book by \cite{box1970time} over the past century along with several extensions. For instance, by explicitly modeling the conditional variance along with conditional mean, there are Autoregressive Conditional Heteroscedastic (ARCH) methods (\cite{engle1982autoregressive}), which have been extended to Generalized Autoregressive Conditional Heteroscedastic (GARCH)  methods (\cite{bollerslev1986generalized}), and variants like Glosten-Jagannathan-Runkle GARCH methods (gjrGARCH) (\cite{glosten1993relation}). Providing a comprehensive literature review of such developments is almost impossible, so we refer to some excellent existing reviews (\cite{chatfield2000time, armstrong2001principles}). In passing, we mention only a few relatively recent methods that we would feature as a part of the application of our proposed methods. Hence, our literature review is limited only to a small subset of articles that are relevant to our models featured later in our article.

Based on traditional time series models, various effective time series forecasting methods have been developed to address different issues in the recent decade. \cite{de2011forecasting} modeled complex seasonality through trigonometric representations, namely TBATS, while \cite{scott2014predicting} developed the Bayesian structural time series (BSTS) method to filter spurious time series features. Another method called Prophet was proposed by \cite{taylor2018forecasting} to incorporate customizations on trends, seasonality, and holidays. With modern advances in machine learning, time series forecasting has benefited from nonlinear tools such as autoregressive neural networks (ARNNs) (\cite{faraway1998time}), ensemble deep learning methods (\cite{ray2021optimized}), and other hybrid or ensemble methods that combine the advantages of classical models and advanced nonlinear machine learning techniques, albeit at the cost of higher computational power.

Forecasting the number of cases of Coronavirus poses challenges due to extreme uncertainty and non-stationarity. \cite{chakraborty2022nowcasting} have presented a rather comprehensive review of various statistical and machine learning methods to address these issues, ranging from ARIMA, ARNN, to Long-Short Term Memory (LSTM)(\cite{ceylan2020estimation}, \cite{chimmula2020time}).
For stock price forecasting, investigations ranging from Machine Learning techniques such as Artificial Neural Networks (ANNs), Support Vector Machine (SVM), random forests, and naive-Bayes, to Deep Learning frameworks such as Convolutional Neural Networks (CNNs), Deep Belief Networks (DBNs), and LSTM (\cite{patel2015predicting}, \cite{sezer2020financial} ). The problem that remains a bit elusive for almost all of these forecasting methods is how much of the historical data is required.


In the era of big data, where data collection has become more accessible, obtaining high volumes of data has become easier. However, when the primary goal is short-term forecasting, using a high-resolution long series of historical data may not be necessary. Most time series data exhibits a feature of diminishing autocorrelation, and long time-series data rarely abides by strict assumptions like strong and/or weak stationarity. Hence, a longer training data series may not significantly improve relative predictive accuracy in practice (see Section \ref{sec:met} for more details).
\cite{smyl2016data} have reported that the most recent time steps contribute the most to forecasting immediate future points. A natural question is to what extent such high volumes of historical data are useful for predicting relatively short series of future values. To explore this, consider a series of auto-correlated data of size $n$ denoted by $X_{1:n}=\{X_1, \ldots,X_n\}$ and the goal is to predict the next $h$ values represented by $X_{(n+1):(n+h)} = \{X_{n+1},\ldots,X_{n+h}\}$. We aim to determine how many of the {\em most recent} $k$ values, denoted by $X_{(n-k+1):n}=\{X_{n-k+1},\ldots,X_n\}$, are practically sufficient to provide accurate forecasts of $X_{(n+1):(n+h)}$ relative to using the full data $X_{1:n}$.

We illustrate the concept with a simple example in Figure \ref{fig:ora} using a simulated AR(5) process.
\begin{figure}[h] 
	\centering
	\includegraphics[width=\textwidth]{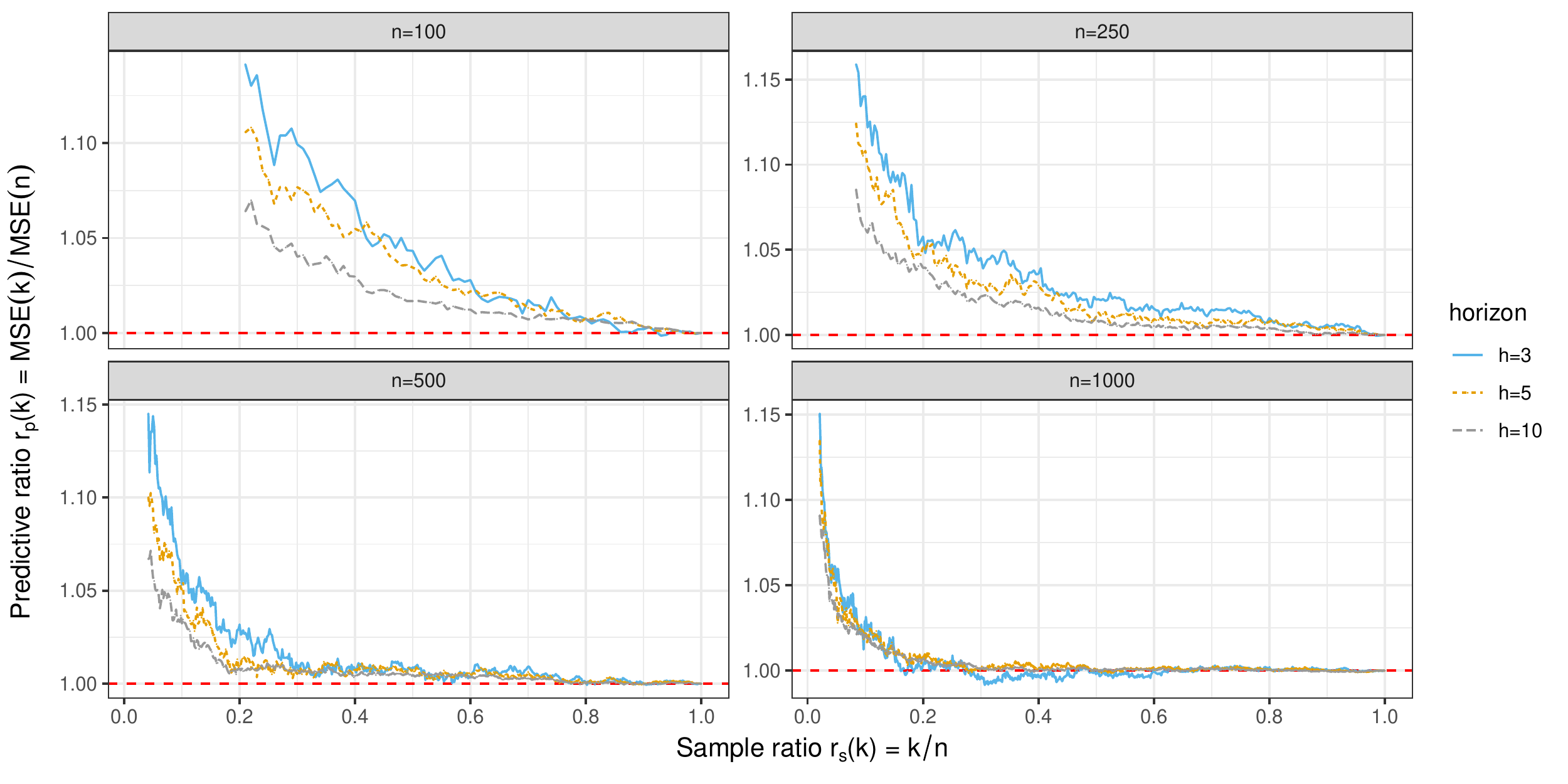}
	\caption{A motivating example of short-term forecasting of stationary data using the most recent $k$ observations. The data is generated from a stationary AR(5) process with autoregressive coefficients $\phi=(0.5, -0.4, 0.3, -0.2, 0.1)^T$, and different sample sizes $n=100, 250, 500, 1000$ are considered. The AR(5) model is fitted using $X_{(n-k+1):n}$ to predict $h$ future values $X_{(n+1):(n+h)}$, where $h=3$ (blue solid), $h=5$ (orange dotted), and $h=10$ (grey dashed). 
The horizontal axis of the plot represents the sample ratio $r_s(k)= \frac{k}{n}$, the proportion of recent observations used for forecasting, with a lower value indicating better sample efficiency. 
The vertical axis illustrates the relative predictive efficiency denoted by $r_p(k)=\frac{MSE (k)}{MSE (n)}$ on the test set, with lower values indicating better efficiency.
The experiment is repeated 1000 times to obtain the Monte Carlo median of $r_p(k)$ for varying $r_s(k)$. As a baseline, a horizontal red dotted line at $r_p(k)=1$ is marked in the plots, representing the predictive ratio when using the full sample size $n$ for forecasting.}
	\label{fig:ora}
\end{figure}
As one would expect, it is clearly evident from Figure \ref{fig:ora} that the predictive error $MSE(k)$ of using the most recent $k$ observations decreases as $k$ increases. Nevertheless, notice that by using only about $20$\% (as $r_s(k)\approx 0.2)$ most recent observations (say when $n=500$ or $1000$), the prediction efficiency (denoted by $r_p(k)$, defined in the caption of the Figure \ref{fig:ora}) achieved by the most recent $k$ values is almost as good as that achieved by the full sample. The goal then becomes estimating the smallest $k$ for which $r_p(k)\geq 1-\epsilon$ with high probability, given any small $\epsilon>0$.

In certain domains, such as finance or medical devices, short-term prediction with a limited number of steps ahead (e.g., $h=10$) can be of primary interest and sufficient for making policy or health status decisions. 
For example, in finance, daily stock prices older than a year or two may not provide much additional information, and ten days ahead forecasts might be sufficient for decision-making. Similarly, in ECG analysis, one-day or one-week ahead forecasts can provide informative insights for users to take appropriate actions (\cite{degiannakis2018forecasting, fan2019forecasting}) with medical devices.
Thus, extracting valuable information solely from a small subset of the most recent data can significantly reduce the burden of large-scale data storage and processing in an era marked by widespread digitization.
Additionally, the assumption of long time series being strictly or weakly stationary is often unrealistic in practice. As a result, researchers have developed local stationary (LS) models (\cite{dette2020prediction}) that assume a slowly changing characteristic of the stochastic process (\cite{nason2000wavelet}). These models may consider the series as an autoregressive process with locally varying parameters over time (we refer to \cite{zhao2015inference, roueff2016prediction, kley2019predictive} for illuminating examples and theory).
In this study, we aim to estimate the optimal subsample size $k$ using a proposed schematic approach called \underline{Pa}reto-\underline{E}fficient \underline{Back}subsampling ({\bf PaEBack}, {\em pronounced pay-back}) method for time series data.

The selection of optimal subsample size $k$ relies on both the employed forecasting model and the chosen discrepancy criteria for evaluating forecasting accuracy. The autoregressive (AR) approximation has been extensively validated as a practical reliable approximation for time series data with theoretical guarantees of accuracy (e.g., \cite{goldenshluger2001nonasymptotic, kley2019predictive}).

In this paper, we present the general framework of PaEBack in Section \ref{sec:met}, and introduce the concept of Pareto optimal efficiency. For clarity and ease of explanation, we introduced the theoretical guarantee under AR setting in Section \ref{sec:ar}; however, the core essence of this paper lies in applying the PaEBack framework to a wide range of time series models, which need not be strictly stationary, such as nonlinear AR process-based or local stationary process-based time series models. 
In Section \ref{sec:sim}, we demonstrate the versatility of the PaEBack framework by employing various time series forecasting models and discrepancy measures under different assumptions. Additionally, we present practical applications of the PaEBack framework using advanced machine learning techniques in the time series analysis of financial and epidemiological data in Section \ref{sec:dat}. Finally, in Section \ref{sec:con}, we summarize concluding remarks and discussions.

\section{PaEBack Under General Setting } \label{sec:met}

Pareto efficiency, also known as Pareto optimality or Pareto superiority, is a concept from economics representing a state in which no individual can be made better off without making someone else worse off. In other words, an allocation of resources or a situation is considered Pareto efficient if it is not possible to make any improvements that benefit one party without negatively impacting another party.

In the context of time series models, Pareto efficiency can be related to the trade-off between the amount of past data used for modeling and the statistical efficiency of the model. Suppose that only a portion of past data is sufficient to achieve a certain level of statistical accuracy or predictive power in a time series model. This means that using additional historical data beyond this point might not significantly improve the model's performance and could potentially introduce noise or unnecessary complexity.
In this scenario, Pareto efficiency would imply that the chosen subset of past data provides the best balance between model performance and computational efficiency. Including more data points may lead to diminishing returns in terms of model improvement while increasing computational inefficiency. Therefore, the model is Pareto efficient if no further improvements can be achieved in terms of predictive accuracy or other relevant metrics by including more past data points.
Pareto efficiency in the context of time series models reflects the idea that there is an optimal point at which the benefits of using additional data are outweighed by the costs, and further optimization would involve a trade-off between the interests of different stakeholders, such as model accuracy, computational resources, and simplicity.

\begin{figure}[h]
	\begin{center}
		\includegraphics[scale=0.5]{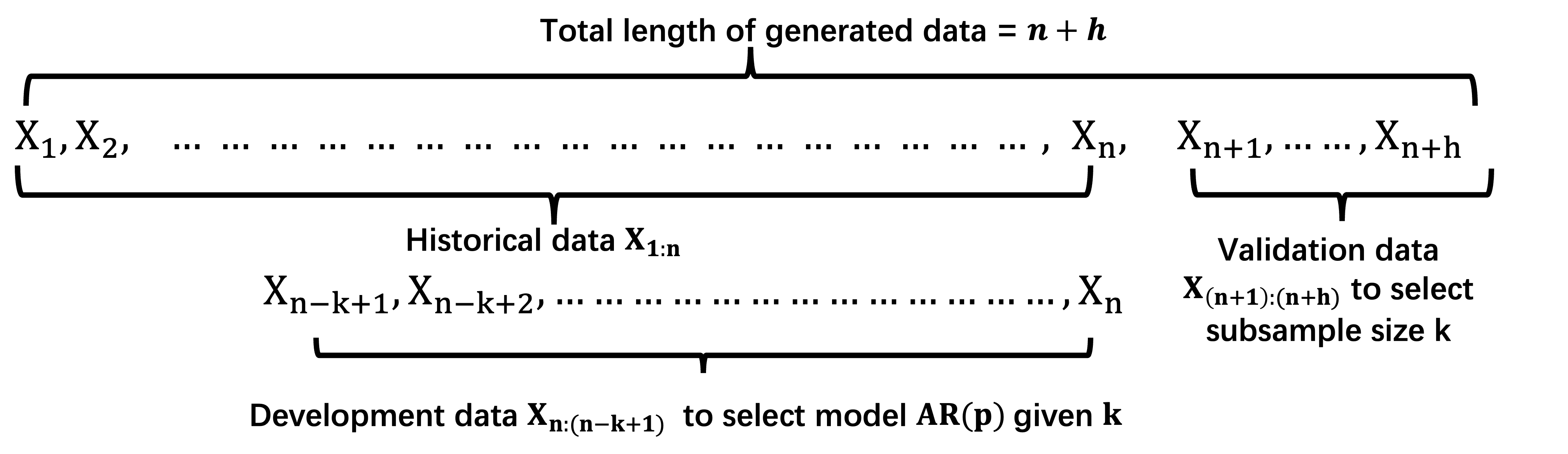}
	\end{center}
	\caption{Data splitting strategy for PaEBack. The data splitting strategy employed in the PaEBack method involves the following steps. Firstly, for each different fixed development size $ k$'s, the predictive model is trained using the development set, denoted as $X_{(n-k+1):n}$, which includes the most recent $k$ data points from the historical data. Secondly, once the predictive model is trained, it is used to generate forecasts for the validation set, denoted as $X_{(n+1):(n+h)}$. The validation set represents a sequence of data points following the development set and extends $h$ steps into the future. This validation set serves as the benchmark for evaluating the predictive performance of using different values of the development size $k$.}
	\label{fig:bs}
\end{figure}

\subsection{General Notations}
Consider ${X}_{1:t} \stackrel{\Delta}{=} \{X_1, X_2, \ldots, X_t\}$, representing the set of time series values observed up to time $t$ (discretized in units such as minutes, days, or weeks). Our primary objective is to select a subsample $X_{(n-k+1):n}$ of size $k$ within the development set, as depicted in Figure \ref{fig:bs}, to forecast the future $h$ steps, $X_{(n+1):(n+h)}$ in the validation set, with both predictive accuracy and computational efficiency in mind. This study aims to estimate the optimal subsample size $k$ using the proposed schematic approach called PaEBack for Time Series data, as illustrated in Figure \ref{fig:bs}.

For each time point $X_{n+t}$, $t=1, \cdots, h,$ in the validation set, denote its conditional expectation given the past $k$ observations in the development set as $X^{t}_{n, k} = \mathbb{E}\left(X_{n+t} \mid X_{(n-k+1):n} \right)$, then its approximation error is $e_{\text{app} }= X_{n+t}-X_{n,k}^t$, dominated by model assumptions. Denote the predicted value of $X_{n+t}$ as $\hat{X}^{t}_{n, k}$, then the corresponding estimation error depending on the model fitting procedure is $e_{\text {est}} = X_{n,k}^t - \hat{X}^{t}_{n, k}$, where its variability increases as the model dimension increases and generally decreases with increasing sample sizes under some stationarity assumptions.

Define the forecasting error at time point $n+t$ for $(t\in\{1, \cdots, h\})$ based on $X_{(n-k+1):n}^n$ as $e _{n,k}^t$, then it can decomposed as:
\begin{align*}
e_{n,k}^t  & = X_{n+t}- \hat{X}^{t}_{n, k} \\
& = \left( X_{n+t}- X_{n,k}^t \right) +  \left (X_{n,k}^t - \hat{X} _{n,k}^t  \right) \\
& = e_{\text {approx}} + e_{\text {est}} 
\end{align*}
Then the accumulative average predictive error on the validation set $X_{ (n+1) : (n+h)}$ when using development subsample $X_{(n-k+1):n}$ can be quantified using the mean squared error $MSE(k)$, where
\begin{equation} \label{eq:mse}
MSE(k) = \mathbb{E} \left( \frac{1}{h} \sum_{t=1}^h { \left(e_{n,k}^t \right) }^2 \right).
\end{equation}

Achieving a high practical prediction accuracy (such as MSE based on validation data) and a high computational efficiency often involves a trade-off. 
When dealing with weakly stationary series, the prediction accuracy is expected to increase with an increasing $k$ if more distant past samples are utilized. However, this would require the assumption of stationarity to be valid for a long extended series. For example, utilizing only 20-25 minutes of the most recent electrocardiogram (ECG) data can yield an accuracy of over $90\%$, compared to using the complete history of ECG, which may span several hours or days. This demonstrates that the percentage of accuracy gained from utilizing the entire historical data may not be practically beneficial. Similarly, in financial stock price prediction, relying on only the most recent daily data for a few months suffice to accurately forecast prices one week ahead instead of utilizing years of daily data.
In light of these considerations, we present the PaEBack method in this paper, which addresses the optimal selection of subsample size $k$ to strike a balance between prediction accuracy and computational efficiency.

\subsection{Pareto Optimal Efficiency for Time Series Data}
\label{sec:bs}

In this section, we strive to achieve a balance between practical prediction accuracy and computational efficiency by adopting the concept of Pareto optimal efficiency. Pareto optimality, also known as Pareto efficiency, arises when multiple efficiency goals are pursued, and no adjustment can be made to improve one objective without adversely affecting another. In our approach, sample efficiency $r_s(k)=\frac{k}{n}$ is the ratio of the chosen sample size to the complete dataset. Predictive efficiency $r_p(k) = \frac{MSE(k)}{MSE(n)}$ is the ratio of prediction accuracy achieved using the past $k$ sample $X_{(n-k+1):k}$ to that obtained from utilizing the full data $X_{1:n}$, with MSE as our exemplary metric. We define the \textbf{Pareto optimal efficiency} in Definition \ref{def:dual}.

\begin{definition}\label{def:dual}
Define the Pareto optimal efficiency as the combination of the sample efficiency quantified by the ratio $r_s(k)$ and predictive efficiency quantified by the ratio $r_p(k)$ respectively:
\begin{equation}
\begin{array}{rl}
\text{\textbf{sample efficiency: }} & r_s(k) = \frac{k}{n} ,  \\
\text{\textbf{predictive efficiency: }} & r_p(k) = \frac{MSE(k)}{MSE(n)},
\end{array}
\end{equation}
where $MSE (k)$ represents the expected Mean Squared Error of forecasting values in the validation set $X_{(n+1):(n+h)}$ predicted by the model using training data from the development set $X_{(n-k+1):n}$. 
For the purpose of discussion, we employ MSE as an illustrative discrepancy criterion to measure forecasting accuracy and Yule-Walker estimation as an example to illustrate computational efficiency. It is worth noting that various estimation methods and alternative criteria can also be employed, and these are discussed further in this section.
\end{definition}

Using fewer training samples can avoid excessive data storage and maintenance and make it less likely to violate assumptions such as local stationarity rather than stationarity. Consequently, a smaller sample efficiency ratio $r_s(k)$ indicates the achievement of a higher computational efficiency. On the other hand, a smaller value of $r_p(k)$ indicates better relative predictive performance of the forecasting model when using the reduced data set $X_{(n-k+1):n}$.

In seeking dual Pareto optimal efficiency, we favor smaller values of $(r_s(k), r_p(k))$ as we vary $k=1,2, \ldots$. These values correspond to the most favorable trade-offs between computational efficiency and predictive accuracy for different subsample sizes, thereby enabling us to identify the most efficient and accurate forecasting model.

\begin{remark}[Generalizability]
Aside from MSE, various discrepancy criteria can also be used to measure the predictive accuracy of the test set, such as Mean Absolute Error (MAE), mean absolute percentage error (MAPE), root-mean-square error (RMSE), and Symmetric mean absolute percentage error (SMAPE) (see \cite{hanh2018lasso, efron2004least, syntetos2005accuracy}). 
Recall the notation of true value at time point $n+t$ is $X_{n+t}$ and its corresponding prediction as $\hat{X}_n^t(k)$, then the criteria mentioned above applied to the test set $X_{(n+1):(n+h)}$ can be expressed as follows:
\begin{eqnarray}
    \arraycolsep=5.4pt \def\arraystretch{0.2}
 \begin{array}{ll}
    \operatorname{MAE} =\frac{\sum_{t=1}^h\left|X_{n+t}-\hat{X}_n^t(k)\right|}{h},  &  \operatorname{MAPE}=\frac{1}{h} \sum_{t=1}^h \left|\frac{X_{n+t}-\hat{X}_n^t(k)}{ X_{n+t}}\right|, \\
    \mathrm{RMSE} =\sqrt{\frac{\sum_{t=1}^h\left( X_{n+t} - \hat{X}_n^t(k) \right)^2}{n}}, & \operatorname{SMAPE} =\frac{100 \%}{h} \sum_{t=1}^h \frac{\left|X_{n+t}-\hat{X}_n^t(k)\right|}{\left(\left|X_{n+t}\right| + \left|\hat{X}_n^t(k)\right| \right) / 2}.
    \end{array}
\end{eqnarray}
Similarly, the use of the $\mathcal{O}(n)$ Yuler-Walker method can be replaced by other methods of estimation such as simple moving average (SMA), exponential smoothing, and aggregation methods (see \cite{syntetos2005accuracy, de2011forecasting, kourentzes2014improving}) depending on the assumed properties of the time series models. 
\end{remark}

\begin{definition} \label{def:pi}
For time series data consisting of $n$ observations, define $\epsilon_n$ as the practical irrelevancy, a PaEBack framework is said to be $(1-\epsilon_n)$ dual efficient if there exists $k=k_n(<n)$ such that (i) $\liminf_{n\rightarrow\infty}\Pr(r_p(k_n)<1+\epsilon_n)=1$ and (ii) $\limsup_{n\rightarrow\infty} {k_n\over n}<1$ when $\epsilon_n\rightarrow 0$ as $n\rightarrow\infty$.
\end{definition}

We utilize the standard large-sample theory for stationary time series (e.g., see \cite{wu2011asymptotic}) to provide theoretical justifications for the proposed PaEBack framework. For stationary time series, it is expected that the predictive ratio $r_p(k)\geq 1$. However, in a finite sample, the practical predictive efficiency $r_p(k)$ is a random quantity with potentially complex expressions, making it challenging to control. To address this issue, we aim to find an asymptotically equivalent value denoted by $Ar_p(k)$ as $n\rightarrow\infty$. This allows us to understand how fast the practical irrelevancy $\epsilon_n$ can approach zero while we accept a small and practically negligible efficiency loss $\epsilon_n>0$.

To derive a simplified expression for the asymptotic predictive efficiency $Ar_p(k)$ such that $r_p(k)/Ar_p(k)\stackrel{p}{\rightarrow} 1$ as $n\rightarrow\infty$, we derive an asymptotic result within AR model framework. In the ensuing theorem, we derive the optimal PaEBack development sample size, denoted by $k^{opt}$, that satisfies $r_p(k)\leq 1+\epsilon_n$, where $\epsilon_n$ is allowed to approach zero at a certain rate. This provides a theoretical insight into the trade-off between predictive efficiency and the sample size in the PaEBack framework. Extension of such asymptotic theoretical derivations for more general models is left as a part of future work.

\section{PaEBack Under AR Setting} \label{sec:ar}
We assume that the discrete time series ${X_1, X_2,\ldots}$ follows a stationary $AR(p)$ process of order $p$ and can be represented by the following equation:
\begin{equation} \label{eq:ar}
X_{t} ={\phi}_{1} X_{t-1} +{\phi}_{2} X_{t-2} +\cdots +{\phi}_{p} X_{t-p}  + \epsilon_{t}, \quad \forall t \geq p+1,
\end{equation}
where $\phi=(\phi_1,\phi_2,\ldots,\phi_p)^T$ denotes the vector of AR coefficients that satisfies the weak stationarity condition, with all roots of the equation $1-\phi_1x - \phi_2x^2 -\cdots - \phi_p x^p = 0$ lying outside the unit circle. The sequence of errors $\{\epsilon_{p+1}, \epsilon_{p+2},\ldots\}$ is assumed to satisfy a white noise process with zero mean and constant variance $\sigma^2$.

For each time point $X_{n+t}$, $t=1, \cdots, h,$ in the validation set, denote
its conditional expectation under AR setting is then
\vspace{-0.3cm}
\begin{align} \label{eq:ywh}
X^{t}_{n, k} = \mathbb{E}\left(X_{n+t} \mid X_{(n-k+1):n} \right) = \sum_{i=1}^{ \lfloor t-1, p \rfloor } \phi_i X^{t-i}_{n,k} + \mathbbm{I}(t \leq p) \sum_{i=t}^p\phi_i X_{n+t-i},
\end{align}
with approximation error as
\begin{align*} 
e_{\text{app} }= X_{n+t}-X_{n,k}^t  = \sum_{i=1}^{ \lfloor t-1, p \rfloor } \phi_i\left(  X_{n+t-i} -  X_{n, k}^{t-i} \right) + \epsilon_{n+t}.
\end{align*}

Denote by ${\phi}_{i,k}$ as an estimator of the AR coefficient $\phi_i$, $i=1, \cdots, p$ (as defined in (\ref{eq:ar})) based on past $k$ observations $X_{(n-k+1):n}$ (e.g., one choose to use the popular Yule-Walker (YW) methods), the predicted value of $X_{n+t}$ is then expressed as:
\begin{align}
\hat{X}^{t}_{n, k} = \sum_{i=1}^{ \lfloor t-1, p \rfloor } {\phi} _{i,k} \hat{X} _{n,k}^{t-i} + \mathbbm{I}(t \leq p) \sum_{i=t}^p {\phi} _{i,k} X_{n+t-i}.
\end{align}
Then the corresponding estimation error depending on the model fitting procedure, is given by:

\begin{align} \label{eq:est}
e_{\text {est}} &= X_{n,k}^t - \hat{X}^{t}_{n, k}   = \sum_{i=1}^{ \lfloor t-1, p \rfloor } \left( \phi_i X_{n,k}^{t-i} -  {\phi} _{i,k} \hat{X} _{n,k}^{t-i} \right) + \mathbbm{I}(t \leq p) \sum_{i=t}^p
\left( \phi_i - {\phi} _{i,k} \right) X_{n+t-i},
\end{align}
where its variability increases as the model order $p$ increases and generally decreases with increasing sample sizes under some stationarity assumptions. The following result provides an asymptotic expression for the optimal subsampling size under a set of regularity conditions.

\begin{theorem}[] \label{thm:1} Denote $A=\sum_{j=1}^h \sigma_j^2$ and $B=\sigma^2 \sum_{j=1}^h tr\{M_j^{T} \Gamma^{-1} M_j \Gamma \}$, where the variance $\sigma^2_j$ and matrix $M_j$ are defined in Lemma \ref{prop:a1h} and Lemma \ref{prop:sigma2h} in Appendix \ref{appA} respectively:
\begin{itemize}
    \item[(i)] The asymptotic value of the predictive efficiency $r_p(k)$ is given by
	\vspace{-0.2cm}
	\begin{align} \label{eq:aasr}
	r_p(k) \sim Ar_p(k) & = 1 + \frac{\frac{n}{k}-1}{1 + n \frac{A}{B}} \xrightarrow{n\rightarrow\infty} 1+\frac{1}{k} \cdot \frac{B}{A}  \quad \mbox{as}\; n\rightarrow\infty. 
	\end{align}
	In other words, $r_p(k)/Ar_p(k)\stackrel{p}{\rightarrow} 1$ as $n\rightarrow \infty$.

    \item [(ii)] Consider a sequence $\epsilon_n \rightarrow 0$ as in Definition \ref{def:pi}, such that $n\epsilon_n \rightarrow \lambda>0$. The Pareto optimal efficient subsampling size $k^{opt}$ that satisfies $Ar_p(k) \leq 1+\epsilon_n$ is given by
	\vspace{-.2cm}
	\begin{align} \label{eq:askopt}
	k^{opt} \sim n \left(  \frac{1}{1+\lambda \frac{A}{B}}  \right).
	\end{align}
\end{itemize}	
\end{theorem}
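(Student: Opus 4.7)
The plan is to approach Theorem~\ref{thm:1} by first decomposing the squared forecast error into two (asymptotically) orthogonal pieces, matching each piece to one of the two cited appendix lemmas, and then doing an elementary algebraic simplification of the resulting ratio.

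First I would start from the decomposition $e_{n,k}^{t} = e_{\text{app}} + e_{\text{est}}$ already recorded in the excerpt. The key observation is that unfolding the recursion for $e_{\text{app}}$ expresses it as a linear combination of the \emph{future} innovations $\epsilon_{n+1},\dots,\epsilon_{n+t}$, whereas $e_{\text{est}}$ is a functional of the \emph{past} block $X_{(n-k+1):n}$. Under the AR($p$) white-noise hypothesis the future innovations are independent of the past, so the cross term has zero mean and $\mathbb{E}[(e_{n,k}^{t})^2] = \mathbb{E}[e_{\text{app}}^2] + \mathbb{E}[e_{\text{est}}^2]$. Lemma~\ref{prop:a1h} then supplies $\mathbb{E}[e_{\text{app}}^2] = \sigma_t^2$, and Lemma~\ref{prop:sigma2h} combined with the standard Yule--Walker asymptotics $\sqrt{k}(\hat\phi_k - \phi) \stackrel{d}{\to} \mathcal{N}(0,\sigma^2\Gamma^{-1})$ yields $\mathbb{E}[e_{\text{est}}^2] \sim \frac{\sigma^2}{k}\operatorname{tr}(M_t^{T}\Gamma^{-1}M_t\Gamma)$. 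Summing over $t=1,\dots,h$ and dividing by $h$ gives the asymptotic expansion $MSE(k) \sim \frac{1}{h}\bigl(A + B/k\bigr)$.

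Part~(i) then reduces to algebra: forming $r_p(k) = MSE(k)/MSE(n)$ and collecting terms yields
\begin{equation*}
\frac{A+B/k}{A+B/n} \;=\; 1 + \frac{B(n/k - 1)}{nA+B} \;=\; 1 + \frac{n/k - 1}{1+nA/B},
\end{equation*}
which is exactly $Ar_p(k)$; letting $n\to\infty$ with $k$ fixed makes the $nA/B$ term dominate the denominator and $n/k$ dominate the numerator, leaving the stated limit $1 + B/(kA)$. For part~(ii) I would solve $Ar_p(k) = 1+\epsilon_n$ explicitly for $k$, obtaining $k_n = n/(1+\epsilon_n+\epsilon_n nA/B)$, then pass to the limit under $n\epsilon_n\to\lambda$: the lone $\epsilon_n$ term is negligible while $\epsilon_n n\to\lambda$, so $k^{opt}\sim n/(1+\lambda A/B)$. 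Both clauses of Definition~\ref{def:pi} follow: the predictive requirement holds by construction, and $k_n/n\to 1/(1+\lambda A/B)<1$ verifies the sample-efficiency clause.

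The main obstacle lies inside the first step: rigorously establishing $\mathbb{E}[e_{\text{est}}^2] = \frac{\sigma^2}{k}\operatorname{tr}(M_t^{T}\Gamma^{-1}M_t\Gamma) + o(1/k)$. Because $\hat X_{n,k}^{t}$ depends nonlinearly (polynomially) on $\hat\phi_k$ through the recursion~(\ref{eq:ywh}), I would need a delta-method expansion around $\phi$ with careful bookkeeping of the Jacobian $M_t$ produced by the recursion, plus a uniform-integrability or moment-bound argument to upgrade the distributional convergence of $\sqrt{k}(\hat\phi_k-\phi)$ into convergence of its second moment. A related subtlety is justifying that the cross term $\mathbb{E}[e_{\text{app}}\,e_{\text{est}}]$ vanishes \emph{exactly} rather than merely to leading order: even though future innovations are independent of the past data, $e_{\text{est}}$ carries random coefficients derived from $\hat\phi_k$, so the argument must rely on the full independence of the innovation sequence (via a tower/martingale-difference step) rather than on mere uncorrelatedness. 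Finally, the ``$\stackrel{p}{\to}$'' claim in~(\ref{eq:aasr}) needs a further layer: since the deterministic expansion above controls the unconditional MSE, one must verify that the same ratio, evaluated at the random plug-in predictor on a single sample path, concentrates around $Ar_p(k)$ --- a law-of-large-numbers argument under the stationarity and ergodicity implied by the root condition on $\phi$.
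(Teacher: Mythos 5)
Your proposal follows essentially the same route as the paper: the paper's Lemma \ref{prop:error} is exactly your decomposition into a future-innovation term with variance $\sigma_t^2$ and an estimation term whose variance $\frac{\sigma^2}{k}\operatorname{tr}\{M_t^{T}\Gamma^{-1}M_t\Gamma\}$ comes from Yule--Walker asymptotics (the paper outsources these calculations to \cite{yamamoto1976asymptotic}), after which $AMSE(k)=A+B/k$ and parts (i) and (ii) reduce to the same algebra you give. The technical caveats you flag (second-moment convergence, the cross term, the in-probability claim) are precisely the points the paper leaves to the cited reference rather than proving, so your sketch is, if anything, more explicit about where the analytic work lies.
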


\noindent The proof of the above result is provided in Appendix \ref{appA}. Interestingly, the asymptotic ratio $\frac{A}{B}$ does not depend on $\sigma^2$. Thus, the asymptotically optimal PaEBack subsampling size $k$ depends on the AR models only through a function of AR coefficients $\phi_j$s.

\begin{example}
For numerical illustration, Figure \ref{fig:ora} provides the dual efficiency for a simple simulated scenario by fitting an oracle AR process with true order known using the Yule-Walker (YW) estimations. The horizontal axis represents the sample efficiency $r_s(k)$ while the vertical axis represents the predictive efficiency $r_p(k)$. As $r_p(k)$ is smaller than one in reality, the actual predictive model using fewer samples $X_{(n-k+1):n}$ achieves higher forecasting accuracy without including all the historical data. Hence, we expect the pair $(r_s(k), r_p(k))$ to get closer to the left bottom corner to indicate better Pareto efficiency. Under this ideal scenario, the predictive efficiency decreases as the development sample size increases at a certain parametric rate. However, when $h=3$ and $n=1000$, we could observe the blue predictive efficiency curve drops even below the horizontal line at $r_p(k)=1$ when the $r_s(k)$ is around $0.3 \sim 0.6$, which shows one can achieve almost the same (or even better) predictive accuracy by utilizing only $30\% \sim 60\%$ of the immediate past observations. 

Taking this setting to demonstrate Theorem \ref{thm:1}, we first notice that the true value of the asymptotic ratio $\frac{{A}}{{B}} = 0.1525$ (see Section \ref{appA2} for further numerical details). Comparing the asymptotically optimal value of $k^{opt}$, we could take $\lambda$ ranging from 4.37 to 15.30. Thus, the estimated bound of the error of the asymptotic predictive ratio $\epsilon_n$ (defined in Theorem \ref{thm:1}) is then $[0.004, 0.015]$, which amounts to an efficiency loss of only $0.4$\% to $1.5$\% relative to the sample efficiency gain of 40-70\%. We also provide a numerical illustration for a real data example by estimating the ratio $A/B$ in Section \ref{appA3} for a series of stock price data. Notice that the ratio $A/B$ is a continuous function of the AR coefficients; it can be consistently estimated using the YW estimates of the AR coefficients.
\end{example}

\subsection{PaEBack with Order Selection}
\label{sec:ordersel}
When fitting it to time series data, the assumption of knowing the true order $p$ of an autoregressive (AR) model is often unrealistic. To address this issue, we relax this assumption and examine the PaEBack dual efficiency achieved when the true order of the AR process is unknown and selected using penalized methods.

Various penalized least square methods, such as the Least Absolute Shrinkage and Selection Operator (LASSO) and adaptive LASSO, have been widely employed for time series order selection (e.g., \cite{zou2006adaptive, wang2007regression, nardi2011autoregressive, hanh2018lasso}). LASSO, originating from linear models, is a well-known variable selection method that minimizes the squared loss with an $\mathit{l}_1$ penalty on the regression coefficients. The LARS algorithm (\cite{efron2004least}) is often used to realize LASSO. To account for the varying importance of different parameters, the adaptive LASSO estimator (also originating from multiple linear regression models) incorporates lag information in time series data (\cite{zou2006adaptive}). However, it is known to have limitations when the predictors exhibit high collinearity, which is often the case with time series data when lagged values are chosen as predictor variables. To address this issue, \cite{zou2009adaptive} extended the adaptive LASSO to the adaptive elastic-net method by introducing a quadratic regularization term.

In Section \ref{sec:221}, we introduced the Pareto Efficient Backsubsampling method for Time Series data via sliding window (PaEBack-SW) approach for parameter tuning. In section \ref{sec:osm}, we delve into the order selection methods for the autoregressive time series modeling. Given the commonly observed sparsity among different lags in autoregressive time series and the decay of correlations between distant time points as lags increase, we present an algorithm to leverage this information by customizing the adaptive weights in Section \ref{sec:aw}.

\subsubsection{PaEBack Parameter Tuning via Sliding Window (PaEBack-SW)} \label{sec:221}

Traditional cross-validation methods, such as 5-fold CV, are commonly employed for parameter tuning; however, with time series data, it is essential to preserve the natural time order during data splitting. Therefore, a sliding window approach is adopted to select the tuning parameter for penalized least squares methods in time series data.
 
Let $p_m$ be the upper bound for the order of AR models. The PaEBack-SW approach divides the development data $X_{(n-k+1):n}$ into $k_m=k-p_m$ pairs of training and testing sets respectively, as illustrated in Figure \ref{fig:sw}. In this approach, for each row $i$, the training set $X_{(n-k+i):(n-k+p_m-1+i)}=\{X_{n-k+i}, \cdots, X_{n-k+p_m-1+i}\}$ is used to fit an AR model with order no greater than $p_m$. The performance of the model is then evaluated using $X_{n-k+p_m+i}$ as a test case, assessing its predictive capability.

\begin{figure}[h] 
	\centering
	\includegraphics[width=.7\textwidth]{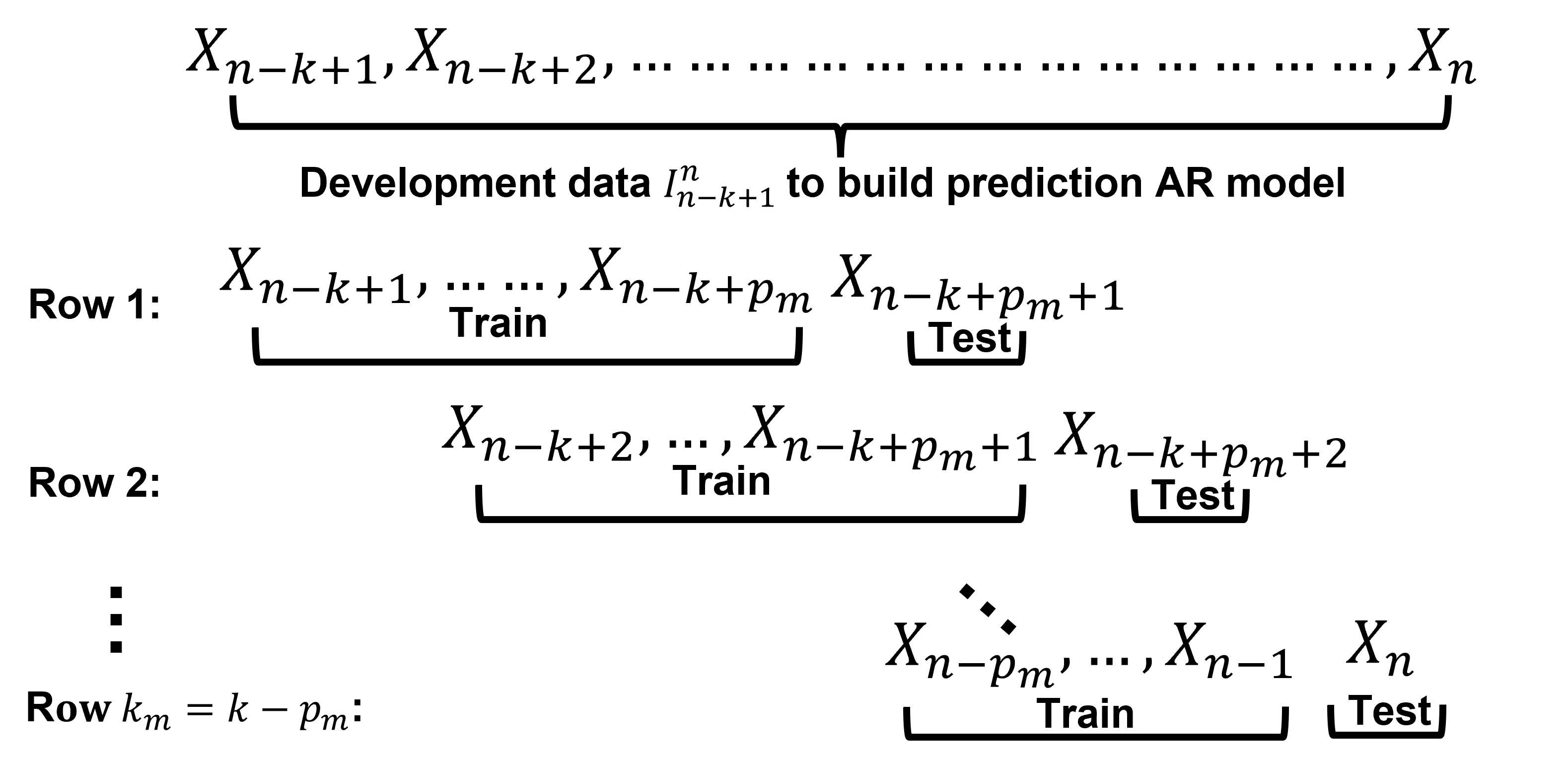}
	\caption{PaEBack-SW, model selection procedure of the PaEBack method using sliding window approach, where the development sample size $k$ is assumed to be at least $p_m$.}
	\label{fig:sw}
\end{figure}
Following the ideas of autoregressive regression (\cite{nardi2011autoregressive, hanh2018lasso}), we can express the time series forecasting problem as a linear model:
\vspace{-0.3cm}
\begin{equation} \label{eq:lm}
\mathbf{y} = \mathbf{Z} \mathbf{\phi} + \mathbf{\epsilon}
\end{equation}
\vspace{-.3cm}
where the response variable $\mathbf{y}_{(k_m : 1)}$, design matrix $\mathbf{Z}_{(k_m : p_m)}$, coefficient vector $\mathbf{\phi}_{(p_m : 1)}$, and error term $\mathbf{\epsilon}_{(k_m : 1)}$ are defined as:
\begin{align} \label{mod:1}
&\mathbf{y}_{(k_m : 1)} \stackrel{\Delta}{=} \left( \begin{array}{c}
X_{n-k+p_m+1} \\ X_{n-k+p_m+2} \\ \vdots \\ X_n
\end{array} \right), \quad
\mathbf{Z}_{(k_m: p_m)} \stackrel{\Delta}{=}\left( \begin{array}{cccc}
X_{n-k+p_m} & X_{n-k+p_m-1} & \cdots & X_{n-k+2} \\
X_{n-k+p_m+1} & X_{n-k+p_m} & \cdots & X_{n-k+2} \\
\vdots & \vdots & & \vdots \\
X_{n-1} & X_{n-2} & \cdots & X_{n-p_m} \\
\end{array} \right) \\
&\mathbf{\phi}_{(p_m: 1)} = (\phi_1, \phi_2, \cdots, \phi_{p_m})^T, \quad
\mathbf{\epsilon}_{(k_m: 1)} = (\epsilon_{n-k+p_m+1}, \cdots, \epsilon_{n})^T. \nonumber
\end{align}
The linear model formulation (\ref{eq:lm}) matches the model selection procedure illustrated in Figure \ref{fig:sw} with, each row of the design matrix $\mathbf{Z}_{(k_m : p_m)}$ and the response variable $\mathbf{y}_{(k_m : 1)}$ in (\ref{mod:1}), matching to a specific pair of the training and testing set in the figure; while each element of $\mathbf{\epsilon}_{(k_m : 1)}$ in (\ref{mod:1}) is the associated white noise as defined in Eq. (\ref{eq:ar}).

\subsubsection{Autoregressive Order Selection using Penalized Methods} \label{sec:osm}
We obtain the penalized estimate of the AR coefficient vector by extending the penalized least squares method known as adaptive LASSO (ALASSO), proposed by \cite{zou2006adaptive} to time series data, which solves the following optimization problem:
\vspace{-.5cm}
\begin{align}
\label{eq:adalasso}
\widehat{\boldsymbol{\phi}}(\text { AL })=\underset{\phi \in \mathbbm{R}^p}{\arg \min }\|\mathbf{y}-\mathbf{Z} \boldsymbol{\phi}\|_{2}^{2}+\lambda \sum_{j=1}^{p} \hat{w}_{j}\left|\phi{j}\right|,
\end{align}
with the adaptive weights:
\vspace{-.1cm}
\begin{equation}
\hat{w}_{j}=\left(\left|\hat{\phi}_{j}^{i n i}\right|\right)^{-\gamma}, \quad j = 1, \cdots, p,
\label{eq:weights}
\end{equation}
where $\gamma$ is a positive constant set to 1 and $\widehat{\boldsymbol{\phi}}^{i n i}$ is an initial root-$k$ consistent estimate of $\phi$, such as the ordinal least square estimation when $p_m < k_m $ or the YW estimate with subsample size $k$. However, while ALASSO achieves sparsity by using the $\mathcal{L}_1$ penalty, it is well known that it suffers from biases due to high multicollinearity, which occurs for time series data as the columns of the design matrix $\mathbf{Z}$ formed by the lagged values of the time series (see Eq. (\ref{mod:1})). Consequently, \cite{zou2009adaptive} broadened the ALASSO by adding a ridge-type $L_2$-penalty to the coefficients, and the adaptive elastic-net estimator is given by solving the following optimization problem:
\begin{equation} \label{eq:ae}
\widehat{\boldsymbol{\phi}}(\text {AE})=
\left( 1 + \frac{\lambda }{2k}(1 - \alpha) \right)
\left\{
\underset{\phi \in \mathbbm{R}^p}{\arg \min }\|\mathbf{y}-\mathbf{Z} \boldsymbol{\phi}\|_{2}^{2}
+ {  \frac{\lambda (1-\alpha)}{2}  }  \|\boldsymbol{\phi}\|_{2}^{2}
+ { \frac{\lambda \alpha}{2}} \sum_{j=1}^{p} \hat{w}_{j}\left|\phi{j}\right|
\right\}.
\end{equation}
Here we use the sliding window data split method to select tuning parameters, $\lambda$ and $\alpha$, in the adaptive elastic-net estimation. The weights of $\hat{w}_j$ given can penalize different parameters adaptively, with the asymptotic theory of ALASSO to establish the oracle consistency under linear model assumptions (\ref{eq:weights}). However, for time series data, we may be able to use different weights that penalize the distant lagged values more than the nearer lagged values. We describe such customization of adaptive weights in the next section.

\subsubsection{PaEBack with Adaptive Weights} \label{sec:aw}

In time series data, correlations between distant time lags typically exhibit a decreasing trend. Consequently, it is expected that the magnitudes of significant coefficients will diminish as the time lag increases. The adaptive weights, as defined in Equation (\ref{eq:weights}), are inversely related to the absolute values of the initial estimator. This enables us to generate non-decreasing adaptive weights by adjusting the absolute value of the initial estimator in a non-increasing order. Drawing inspiration from the principles of monotone regression, which involves fitting a monotone function to a set of data points in a plane (\cite{de2009isotone}), we have adopted a straightforward approach known as the PaEBack non-increasing adjusted adaptive weight algorithm (refer to Algorithm \ref{algo:wt} in Appendix \ref{sec:appC}). 




\section{Numerical Illustrations using Simulated Data}
\label{sec:sim}

This section provides comprehensive experimental results from our simulation analysis, focusing on investigating various order selection methods and their impact on the PaEBack framework.

We begin by presenting the simulation results under oracle settings in Section \ref{sec:ora}, aiming to gain insights into the PaEBack dual efficiency and optimal development size under varying horizons and historical sample sizes. Regarding the effects of order selection on PaEBack efficiency, we explore different models and analyze the findings in Section \ref{sec:ord}. Additionally, we conduct experiments to assess the effects of model misspecification by generating data from the threshold autoregressive (TAR) process and fitting it using AR models in Section \ref{sec:tar}, which allows us to examine the sensitivity of AR approximation under the PaEBack framework, as discussed in Section \ref{sec:tar}. Furthermore, we include a comparison with other subsampling methods in Appendix \ref{sec:fkc}.

\subsection{PaEBack Dual Efficiency under Oracle AR Setting for AR Process} \label{sec:ora}

Under the oracle setting, we fit the auto-regressive models with known true order $p$ but varying historical sample size $n$ and forecasting horizon $h$ to gain insights into predictive and sample efficiency performance under different scenarios. Simulated data is generated independently from the stationary AR(5) process with $\phi=c(0.5, -0.4, 0.3, -0.2, 0.1)^T$. We present the median of the efficiency curves for each $k$ based on 1000 replicates of the time series data in Figure \ref{fig:ora} and the baseline average $MSE(n)$ while using the full development sample in Table \ref{tab:ex1}.

\begin{table}[h]
	\begin{center}
		\caption{ The baseline average $MSE(n)$ (based on 1000 replicates), of using full development data in the oracle setting corresponding to the $MSE(n)$ of Figure \ref{fig:ora}. }
		\label{tab:ex1}
		\begin{tabular}{@{}c ccc@{}} \\ \toprule
			n & h = 3 & h = 5 & h = 10 \\ \midrule
			100 & 1.192 & 1.216 & 1.287 \\ 
			250 & 1.262 & 1.283 & 1.307 \\ 
			500 & 1.180 & 1.238 & 1.280 \\ 
			1000 & 1.222 & 1.239 & 1.264 \\  \bottomrule
		\end{tabular}
	\end{center}
\end{table}

Notice that when using full historical data, the forecasting error in Table \ref{tab:ex1} is the denominator of the predictive ratio in Figure \ref{fig:ora}. The larger $MSE(n)$ in Table \ref{tab:ex1}, the larger denominator of the y-axis in Figure \ref{fig:ora}, and the lower predictive ratio at the starting sample ratio. Under the aforementioned oracle setting, we could observe the impacts on dual efficiency from the following perspectives: 

\begin{itemize}
	\item[(i)] \textbf{Forecasting horizon $h$:} Given the true order is $p=5$, we consider the prediction step of lengths $h=3, 5, 10$ corresponding to the situation when $h$ is less than $p$, equal to $p$, or greater than $p$ respectively. As the results show, the predictive ratio for a larger horizon $h$ has an earlier trend to converge since the size of the validation set is equal to the forecasting horizon $h$, and the larger the horizon, the longer the validation set, and by the averaging effect, the more stable forecasts are given.
	
	\item[(ii)] \textbf{Historical sample size $n$:} The total length of historical data is varied from $n=100$, $250$, $500$, to $n= 1000$, which represents short, moderate, medium, and large sample sizes encountered in practice. For example, for daily stock prices, a year's worth of data would typically be of size $n=250$. We could observe that the predictive ratio $r_p(k)$ has a decreasing pattern, converging to 1 while the sample ratio $r_s(k)$ increases to 1. The larger the historical sample size $n$, the smaller the sample ratio $r_s(k)$ for the predictive ratio to converge to 1.

	\item[(iii)] \textbf{Practical Optimality:} Notice that there exists the probability of achieving better forecasting accuracy while using fewer development samples rather than the whole historical series. When $h=3$ and $n=1000$, using the most recent $30\% \sim 60\%$ observations can obtain even better forecasting performance than using the whole data.
\end{itemize}

\subsection{PaEBack Dual Efficiency with Order Selection for AR Process} \label{sec:ord}
In this section, we study the effects of not knowing the true order of an AR model under different order selection strategies. 1000 simulated data is generated independently from the stationary AR(5) process with $\phi=c(0.5, -0.4, 0.3, -0.2, 0.1)^T$ and the forecasting horizon is fixed at $h=5$. As introduced in Section \ref{sec:ordersel}, we have investigated the effects on dual efficiency with the following four order selection methods:
\begin{itemize}
    \item  \textbf{YW:} the oracle model using the Yule-Walker estimations with the known correct order $p$; 
    \item \textbf{AL:} the adaptive LASSO to select the order of the AR model using the estimator in Eq. (\ref{eq:adalasso}) with $\lambda$ selected by cross-validation using the PaEBack-SW approach, which we introduced in Section \ref{sec:221}.
    \item \textbf{AE:} the adaptive elastic net method to select the order of the AR model with $\lambda$ selected by cross-validation using the PaEBack-SW strategy, while parameter $\alpha$ defined in Eq.(\ref{eq:ae}) which determines the proportion between $\ell_1$ and $\ell_2$ penalty is set to be fixed at $0.5$.
    \item \textbf{ATE:} the adaptive elastic net method to select the order of the AR model both parameters $\lambda$ and $\alpha$ tuned by cross-validation using the PaEBack-SW strategy.
\end{itemize}
 
\begin{figure}[h]
	\centering
	\begin{minipage}[t]{.95\linewidth}
		\centering
		\includegraphics[width=\textwidth]{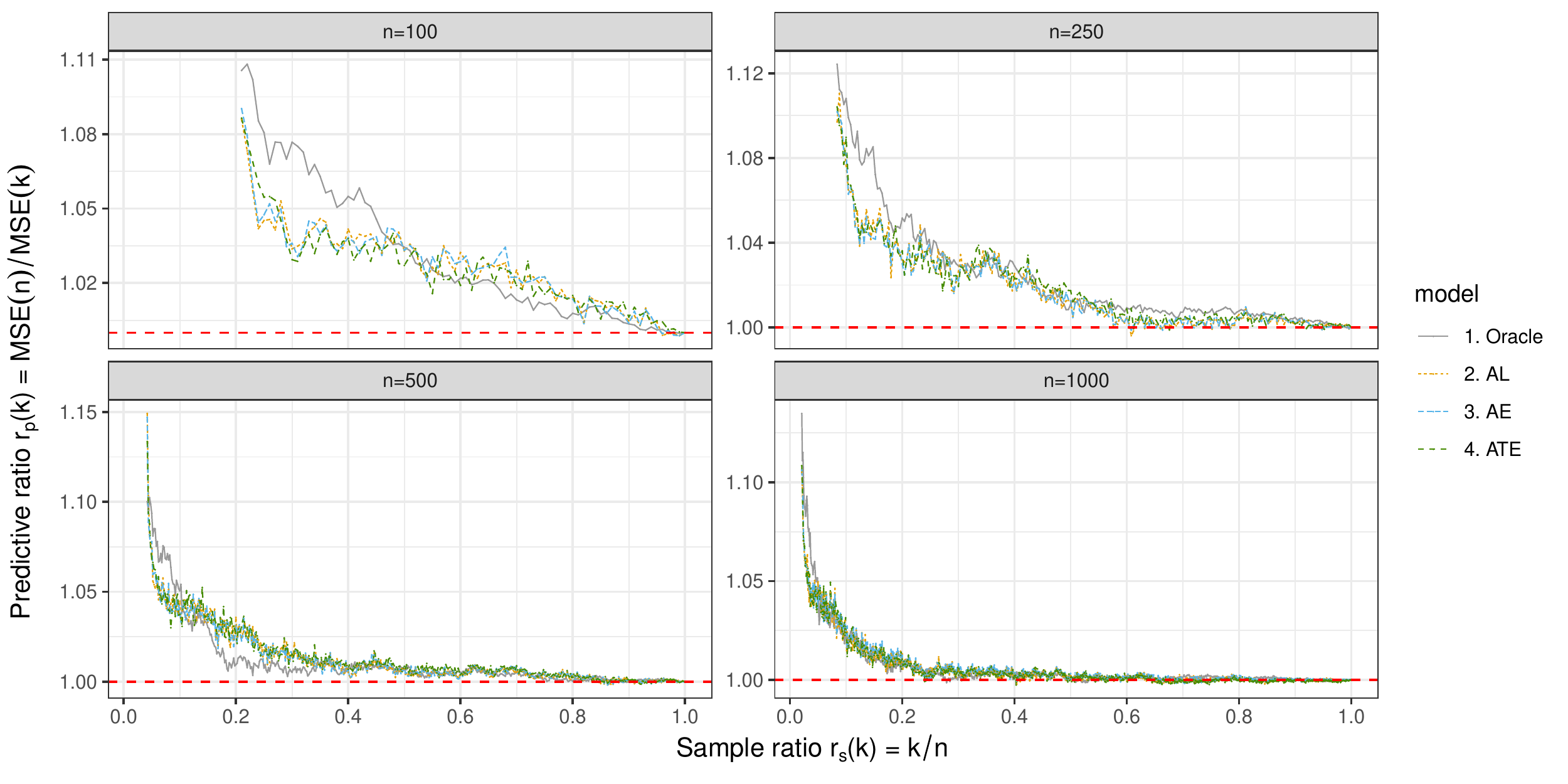}
	\end{minipage}
	\centering
	\caption{Two-fold efficiency of different order selection methods. The development size is varied from $n=100$, $250$, $500$, to $n=1000$ with forecasting horizon fixed at $h=5$ using four different order selection methods described at the beginning of Section \ref{sec:sim}: (1) Oracle (grey solid); (2) ALASSO (orange dotted); (3) Adaptive elastic net (blue dashed); (4) adaptive elastic net with both parameters tuned by cross-validation (wider green dashed).
		Corresponding $MSE(n)$s, the denominator of the y-axis, are displayed in Table \ref{tab:simu2}.}
	\label{fig:simu2}
\end{figure}

\begin{table}[h]
	\begin{center}
		\caption{The baseline average $MSE(n)$ (based on 1000 replicates), of using full development data in studying order selection effects in Section \ref{sec:ord} corresponding to $MSE(n)$ of Figure \ref{fig:simu2}.}
		\label{tab:simu2}
		\begin{tabular}{@{}c cccc@{}} \\ \toprule
			n& YW & AL & AE & ATE \\  \midrule
			100 & {1.2161} & 1.2416 & 1.2422 & 1.2413 \\ 
			250 & {1.2833} & 1.2960 & 1.2961 & 1.2939 \\ 
			500 & {1.2380} & 1.2445 & 1.2450 & 1.2439 \\ 
			1000 & {1.2385} & 1.2408 & 1.2407 & 1.2418 \\   \bottomrule
		\end{tabular}
	\end{center}
\end{table}

As Figure \ref{fig:simu2} shows, the main trend of predictive ratio curves of all order selection methods is similar: predictive ratio decreases while sampling ratio increases and converges to 1 as the $r_s(k) \rightarrow 1$. However, when the sample ratio is small, the relative predictive ratio of using penalized variable selection methods drops faster so that by adding a small percentage of the development sample, the forecasting performance of order selection methods improves more rapidly than the oracle setting. Nevertheless, this improvement is not that substantial when the development sample size is large. Notice that the denominator of the oracle model (YW), as Table \ref{tab:simu2} shows, is lower than that of all the other methods since it assumes the true model and has a relatively low forecasting error to shrink.

\subsection{PaEBack Dual Efficiency under Model Misspecification} \label{sec:tar} 

In this section, we investigate the effect of model misspecification on the dual efficiency of the PaEBack framework, corresponding to real-life scenarios where the fitted models are seldom perfect or even close. There have been studies on the robustness of fitting a $p^{th}$ order AR model in stationary linear time series models without knowing the actual order of an AR process (\cite{bhansali1981effects, kunitomo1985properties}). 

The threshold autoregressive (TAR) example has been used in illustrating model misspecification for subsampling methods (\cite{fukuchi1999subsampling}). Following their set-up, we generate time series data from TAR($1$) of the form:
\begin{equation} \label{eq:tar}
X_t=\left\{ \begin{array}{cc}
0.14+0.10X_{t-1}+\epsilon_t & \text{if } X_{t-1} < -0.2 ,\\
0.80 X_{t-1} + \epsilon_t & \text{if } X_{t-1} \geq -0.2 ,
\end{array} \right.
\end{equation}
We apply the AR approximation and order selection technique introduced in the earlier section to these TAR time series data with the PaEBack technique to investigate the effects under such misspecification scenarios.

\begin{figure}[h]
	\centering
	\begin{minipage}[t]{.99\linewidth}
		\centering
		\includegraphics[width=\textwidth]{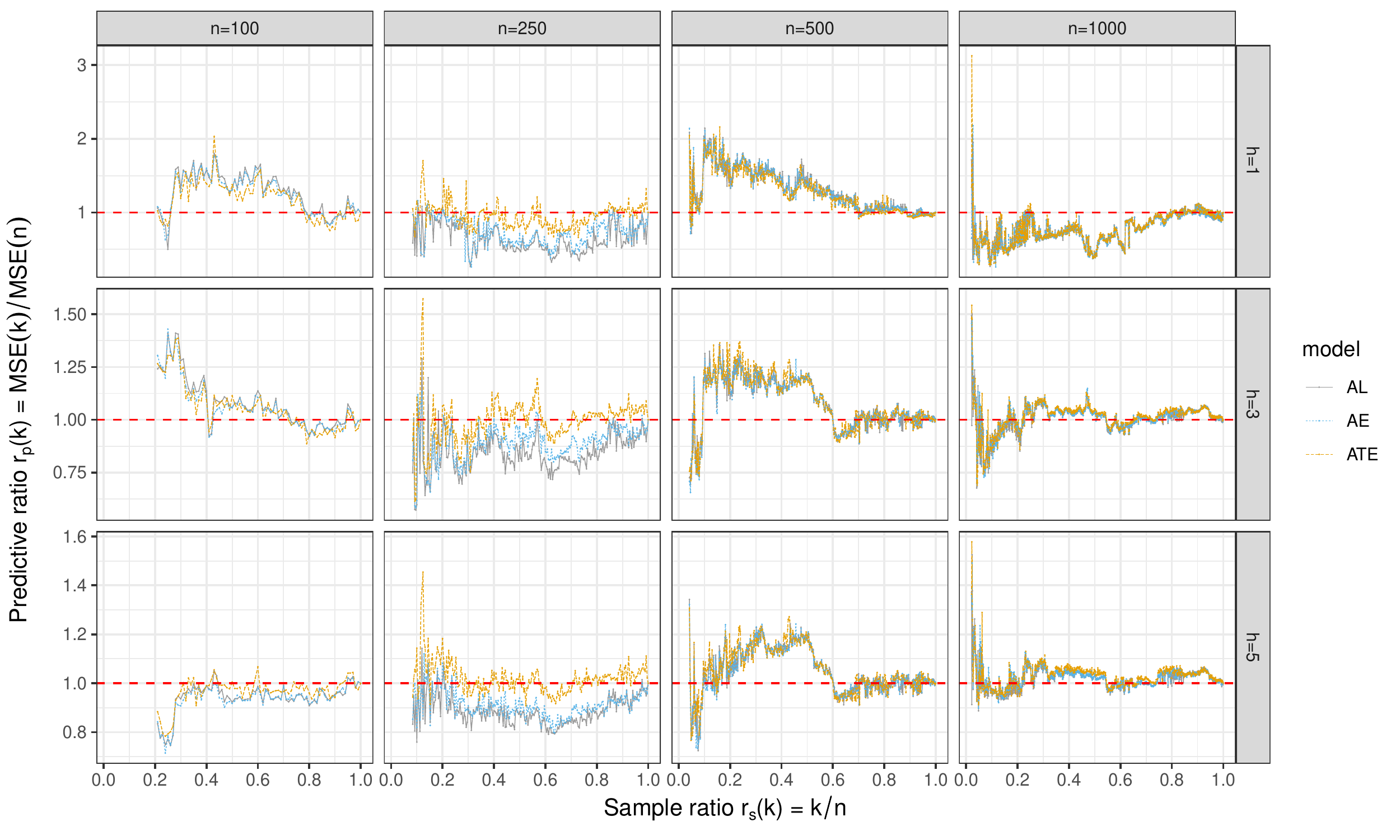}
	\end{minipage}%
	\centering
	\caption{PaEBack dual efficiency of misspecified AR models under simulated TAR data. We use three methods to estimate the order: ALASSO (grey solid);  adaptive elastic net (blue dotted); tuned adaptive elastic net (orange dashed ). 1000 simulated data is generated independently from the TAR(1) process defined in Eq. (\ref{eq:tar}) with historical sample size $n$ and forecast horizon $h$ varying as the labels show.}
	\label{fig:tar}
\end{figure}

\begin{table}[h]
	\begin{center}
		\caption{ The baseline average $MSE(n)$ of using the full historical data generated from TAR(1) process defined in Eq. (\ref{eq:tar}), corresponding to visualized results in Figure \ref{fig:tar}.}
		\label{tab:tar}
		\begin{tabular}{@{}c ccc c ccc@{}} \toprule
			& \multicolumn{3}{c}{$n=100$} & \phantom{abc}& \multicolumn{3}{c}{$n=250$} \\ \cmidrule{2-4} \cmidrule{6-8} 
			& AL & AE & ATE && AL & AE & ATE \\ \midrule
			h=1 & 1.103 & 1.104 & 1.157 & & 0.763 & 0.745 & 0.608 \\ 
			h=3 & 1.468 & 1.466 & 1.501 & & 1.284 & 1.221 & 1.120 \\ 
			h=5 & 1.653 & 1.653 & 1.642 & & 1.403 & 1.405 & 1.281 \\   \midrule
			& \multicolumn{3}{c}{$n=500$} & \phantom{abc}& \multicolumn{3}{c}{$n=1000$} \\ \cmidrule{2-4} \cmidrule{6-8} 
			& AL & AE & ATE & & AL & AE & ATE \\ \midrule
			h=1 & 1.521 & 1.519 & 1.536 & & 0.285 & 0.286 & 0.282 \\ 
			h=3 & 1.382 & 1.380 & 1.376 & & 0.556 & 0.557 & 0.553 \\ 
			h=5 & 1.058 & 1.059 & 1.047 & & 0.806 & 0.807 & 0.804 \\  \bottomrule
		\end{tabular}
	\end{center}
\end{table}

As Table \ref{tab:tar} shows, the forecasting error MSE$(n)$ for larger historical sample size $n$ is smaller. The adaptively tuned elastic net method provides better forecasting accuracy when $n > 100$. As Figure \ref{fig:tar} shows, most simulations have better prediction performances ($y<1$) when the sample ratio is small, for which we believe that the Pareto dual efficiency can be improved by adopting the PaEBack framework even under the model misspecification scenario.

\section{PaEBack: Application to Real Data Sets}
\label{sec:dat}

In this section, we demonstrate the practical application of PaEBack using well-known publicly available stock price data, offering a concrete example for easy understanding. Furthermore, we employ the PaEBack method to analyze confirmed cases of Coronavirus data, which exhibits high volatility. This case study enables us to evaluate the effectiveness of the proposed framework on non-stationary time series data, incorporating comprehensive model comparisons. Both examples serve to showcase the efficacy of the PaEBack method.

\subsection{Log Return of Stock Prices} \label{sec:stock}

In financial analysis, stock prices are typical time series data. Forecasting the return of stock prices is one of the most effective tools for risk management. It has convenient publicly available resources that are easy to learn and reproduce (\cite{stock}).

Define the stock price at time $t$ as $p_t$, compared with the return $r_t = \frac{p_t - p_{t-1}}{p_{t-1}}$, which captures the relative difference of stock prices at time $t$, the log return $X_t$ is defined as
\begin{equation}
X_t = ln(1+r_t) = ln\left( \frac{p_t}{p_{t-1}}  \right) = ln(p_t) - ln(p_{t-1}),
\end{equation}
whose following properties may be useful to establish better statistical analysis: (1) The time-additive property of $ln\left( \prod_{t=1}^n (1+r_t) \right) = \sum_{t=1}^nX_t= ln(p_n) - ln(p_0)$ makes it easier to compute and preserves consistency when $n$ is large. (2) The normality assumption is easier to fit into the format of $X_t$. (3) When $r$ is small, $ln(1+r) \approx r$ can give a fairly accurate approximation. More descriptions on log return can be found in \cite{paparoditis2009resampling}.

Here, we take the end-of-day adjusted prices as an example from Yahoo Finance (\cite{stock}). The stock prices of four corporations used in the analysis from January 3, 2017 (n=1), to October 1, 2021 (n=1000), have been plotted in Figure \ref{fig:stockprice}.

\begin{figure}[h]
	\centering
	\begin{minipage}[t]{\linewidth}
		\centering
		\includegraphics[width=\textwidth]{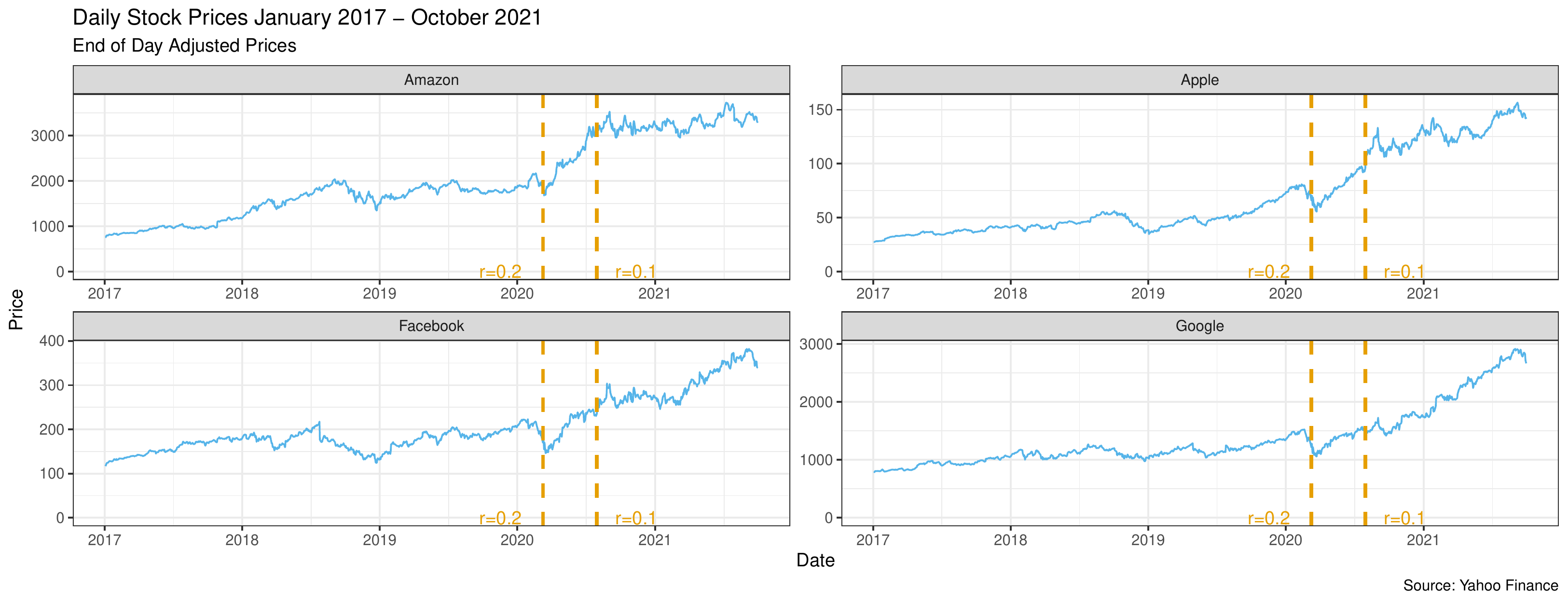}
	\end{minipage}%
	\centering
	\caption{The daily stock price of four corporations was used in the real data analysis.}
	\label{fig:stockprice}
\end{figure}


\subsubsection{Autoregressive Modeling}

\begin{figure}[h]
	\centering
	\begin{minipage}[t]{\linewidth}
		\centering
    \includegraphics[width=\textwidth]{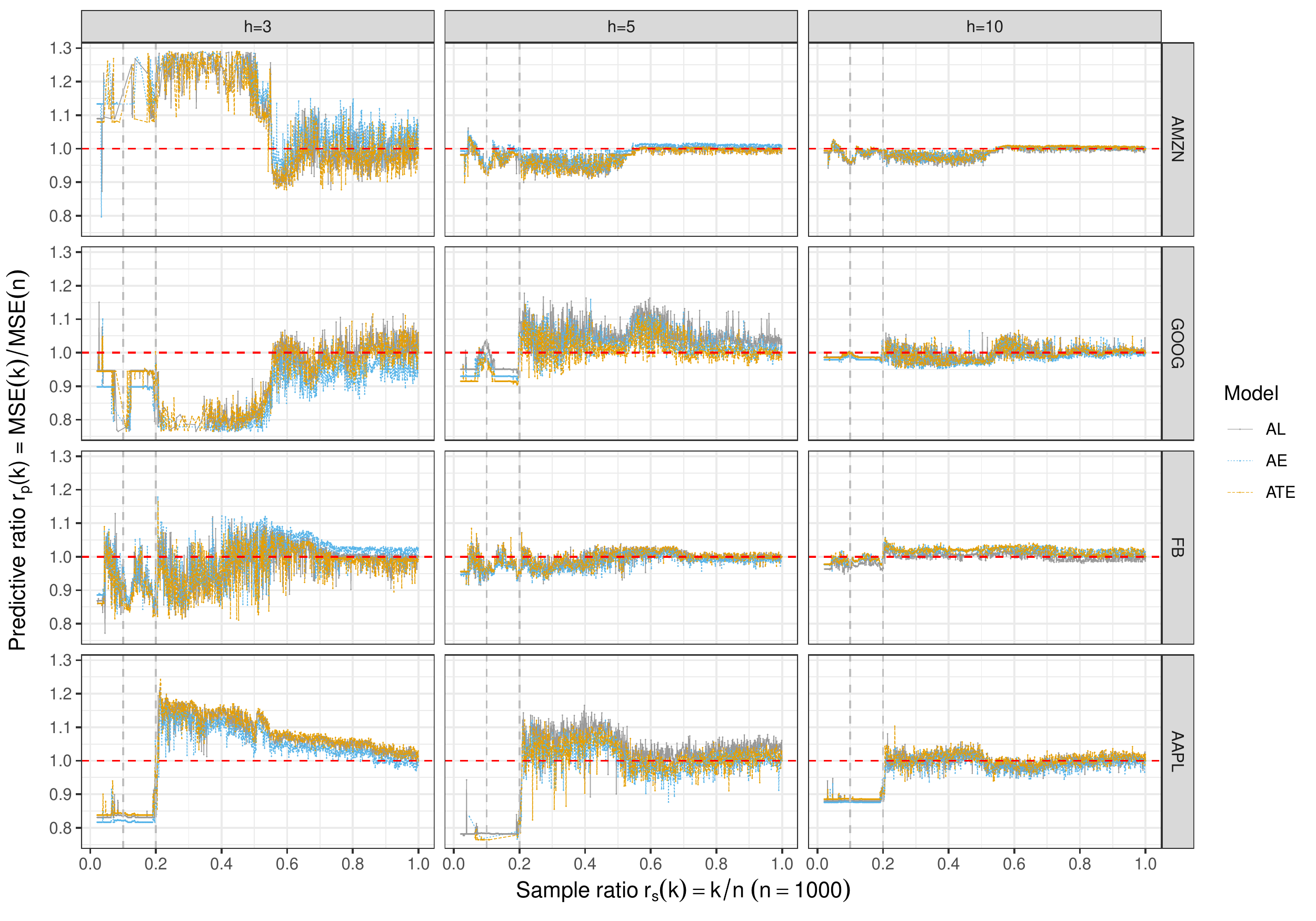}
	\end{minipage}%
	\centering
	\caption{PaEBack efficiency for the short-term stock price forecast with varying development size $n$, model, forecasting horizon $h$ of four different corporations. \label{fig:stockresult}}
\end{figure}

We set the forecasting horizon at $h=3, 5, 10$ respectively, and historical sample size varying as $n=100,250, 500, 1000$ with corresponding dates. We use different methods to select the order and record the $MSE(n)$  in Table \ref{tab:stock} with visualized PaEBack efficiency in Figure \ref{fig:stockresult}.

\begin{table}[h]
	\begin{center}
		\caption{ The forecasting error, $MSE(n) \times 10^{-4}$, of using $n=1000$ historical observations of  for each corporations.}
		\label{tab:stock}
		\begin{tabular}{@{}l ccc c ccc@{}} \toprule
			& \multicolumn{3}{c}{$Apple$} & \phantom{abc}& \multicolumn{3}{c}{$Amazon$} \\ \cmidrule{2-4} \cmidrule{6-8} 
			& h = 3 & h = 5 & h = 10 & & h = 3 & h = 5 & h = 10 \\ \midrule
			AL & 3.684 & 6.078 & 5.005 & & 0.182 & 2.897 & 2.840 \\ 
			AE & 3.749 & 6.269 & 5.029 & & 0.175 & 2.867 & 2.824 \\ 
			ATE & 3.654 & 6.233 & 4.974 & & 0.184 & 2.904 & 2.823 \\   \midrule
			& \multicolumn{3}{c}{$Facebook$} & \phantom{abc}& \multicolumn{3}{c}{$Google$} \\ \cmidrule{2-4} \cmidrule{6-8} 
			& h = 3 & h = 5 & h = 10 & & h = 3 & h = 5 & h = 10 \\ \midrule
			AL & 1.758 & 3.663 & 3.301  & & 0.436 & 1.427 & 1.109 \\ 
			AE & 1.724 & 3.689 & 3.256 & & 0.460 & 1.459 & 1.118 \\ 
			ATE & 1.775 & 3.657 & 3.252 & &  0.437 & 1.482 & 1.110 \\   \bottomrule
		\end{tabular}
	\end{center}
\end{table}

According to Table \ref{tab:stock} and Figure \ref{fig:stockresult}, we have the following observations:
(1) The predictive ratio is low and even less than 1 for small sample ratios, which validates the statement that using fewer development samples but closer observations can practically improve the predictive accuracy aside from saving computational time.
(2) There are two noticeable changes in the predictive efficiency at sample ratio around $r_s(k)=0.1$ and $r_s(k)=0.2$, corresponding to development sample size $k \approx 100$ and $k \approx 200$, and the date around March 3, 2020, and July 30, 2020, labeled in Figure \ref{fig:stockprice}, the daily stock price plot as well.
These two points match two critical change points of the phase of Coronavirus. On March 11, 2020, the World Health Organization (WHO) declared COVID-19 a global pandemic (\cite{cucinotta2020declares}) when the pandemic started to impact financial marketing. At the end of July, multiple biotech giants such as Pfizer, BioNTech, and Moderna announced promising clinical results with funding. After several days, on August 3, a new pandemic phase was officially announced. The evidence strongly supports the current model's ability to capture the time series character using PaEBack samples.

\subsubsection{Model Fitting Using Machine Learning Methods} \label{sec:stockm}
Stock price forecasting models mainly have two types: traditional time series and machine learning methods. On stock data during COVID-19, the performance of the machine learning models (Long-Short Term Memory and XGBoost) was not as good as the AR models or the Last Value models (\cite{mottaghi2021stock}). This might be due to a strong correlation between the price values of close days, for which the PaEBack framework can help to improve dual efficiency. Among traditional time series models to forecast stock price, a combination of the ARIMA and the generalized autoregressive conditional heteroskedasticity (GARCH) model has been shown to yield better performances (\cite{grachev2017application, gao2021research}). Hence, we compare the performance of the following seven models using either the full historical sample size or their PaEBack subsamples.

\begin{itemize}
    \item  \textbf{AR:} The best traditional AR model according to the Akaike information criterion (AIC).
    \item  \textbf{ARLasso:} The adaptive Lasso is introduced in Section \ref{sec:ordersel}.
    \item \textbf{ARElas:} The proposed adaptive elastic-net estimator using adaptive weights with PaEBack-SW parameter selection is introduced in Section \ref{sec:ordersel}.
    \item  \textbf{ARIMA:} The ARIMA(p,d,q) model with step-wise parameter selection method in \cite{hyndman2008automatic}.
    \item  \textbf{GARCH:} Among all GARCH($p,q$) models with varying $p$'s and $q$'s, GARCH (1,1) has been found to perform well in forecasting stock price, while higher parameters overestimate the levels of volatility (\cite{grachev2017application, gao2021research}). Hence, the performance of the GARCH(p,q) model with a combination of low orders of $p$'s and $q$'s are compared based on the Bayesian information criterion (BIC) (\cite{schwarz1978estimating}), which is a widely applied model selection criterion. As the \textit{"GARCH"} column of Table \ref{tab:garchbic} in Appendix \ref{sec:appD} shows, GARCH(1,1) yields the least BIC, and is thus selected.
    \item  \textbf{ARIMA-GARCH (AGARCH):} The combination of ARIMA(0,0,1)-GARCH(1,1) is determined based on the following steps.  (a) Fix GARCH(1,1) as explained in (v). (b) Since the first-order differencing is sufficient to model the original stock price (\cite{ariyo2014stock}) and the log return has already calculated the difference, the differencing parameter $d$ of ARIMA(p,d,q) is thus fixed at $d=0$. (c) Among the ARIMA(p,0,q)-GARCH(1,1) variants suggested in \cite{grachev2017application}, $p=0$ and $q=1$ yields the least BIC as the  \textit{"AGARCH"} column in Table \ref{tab:garchbic} in Appendix \ref{sec:appD} shows.
     
    \item  \textbf{gjrGARCH:} The gjrGARCH models improve GARCH by modeling the positive and negative values  asymmetrically (\cite{glosten1993relation}). Among gjrGARCH(p,q), $p=0$ and $q=1$ is selected according to the BIC shown in the \textit{"gjrGARCH"} column of Table \ref{tab:garchbic} in Appendix \ref{sec:appD}.
    \item \textbf{ARIMA-gjrGARCH(AgjrGARCH):}
    The combination of ARIMA(0,0,1) and gjrGARCH(1,1) is determined by firstly fixing ARIMA(p,0,q) and gjrGARCH(1,1) and selecting $p$ and $q$ based on the BIC shown in the \textit{"AgjrGARCH"} column of Table \ref{tab:garchbic} in Appendix \ref{sec:appD}.
\end{itemize}


Here we take the $h=10$ days-ahead forecast on the log return of Amazon's stock price with $n=1000$ historical data points as an example to compare the practical performance of these eight models. Figure \ref{fig:amzn} illustrate their corresponding Pareto optimal efficiency with the y-axis as $MSE(n)$ $MSE(k)=r_p(k)*MSE(n)$. By multiplying the predictive ratio with $MSE(n)$, it is more convenient to compare the predictive performance of different models by simply visualizing the curve's height. Notice that the red dotted line is $MSE(n)$. The curve above the red dotted line means a $r_p(k)>1$ while the curve under the red dotted line means a practical predictive ratio $r_p(k)<1$, indicating a better practical predictive performance of using less sample size.

\begin{figure}[h]
	\centering
	\begin{minipage}[t]{\linewidth}
		\centering
		\includegraphics[width=\textwidth]{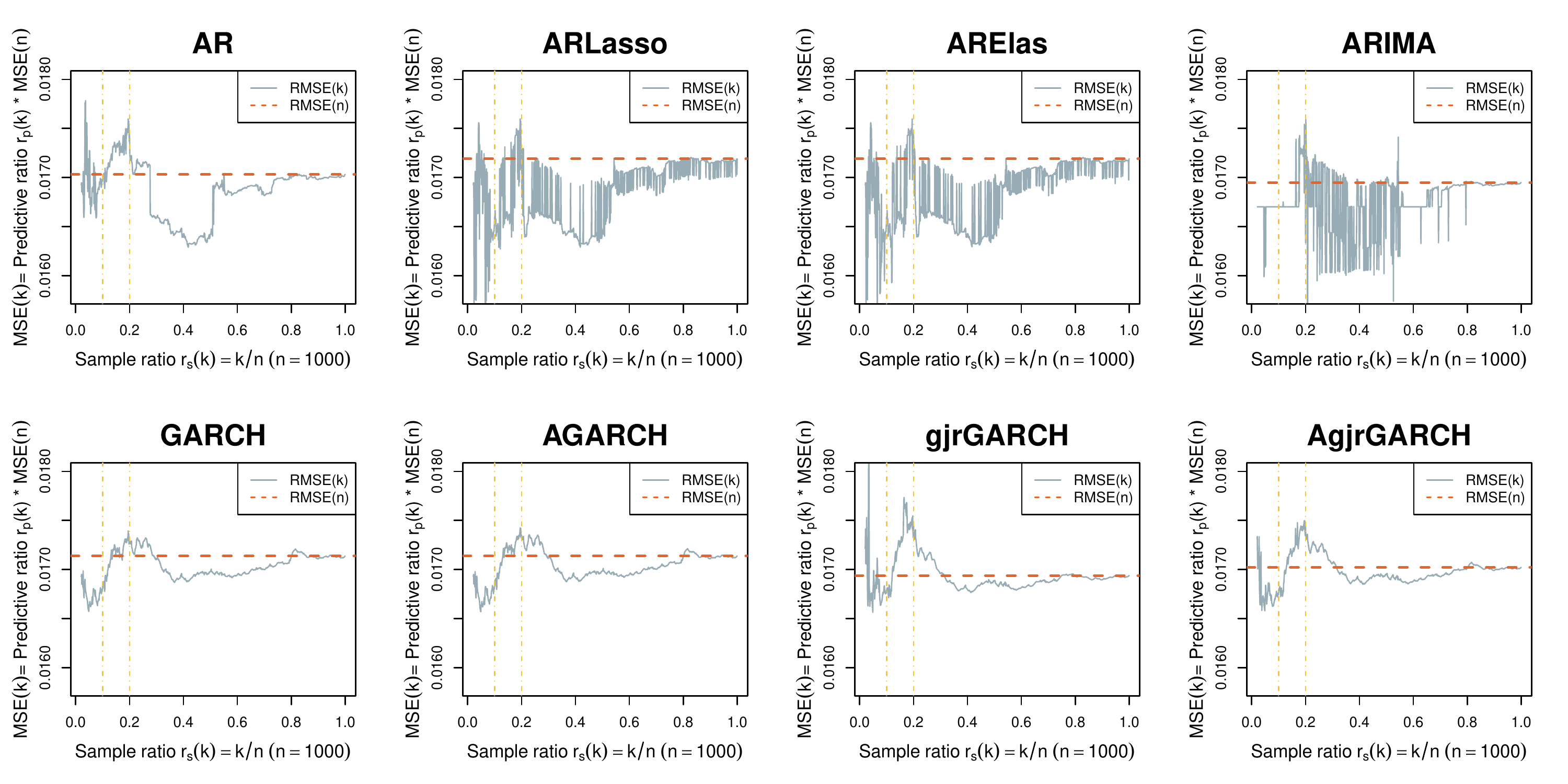}
	\end{minipage}%
	\centering
	\caption{The Pareto efficiency plot of $h=10$ days-ahead forecasting of Amazon's daily stock return. The predictive performances of eight methods introduced in Section \ref{sec:stockm} are evaluated by RMSE. The horizontal red line is the baseline RMSE using a full historical sample size $n=1000$, while the vertical orange lines characterize two critical ratios.}
	\label{fig:amzn}
\end{figure}

As Figure \ref{fig:amzn} shows, we can observe: 
(1) The predictive efficiency $r_p(k)$ drastically changes, decreases or even increases when $r_s(k)$ is small but converges to 1 when the sample ratio $r_s(k)$ is larger than some threshold. Hence, the improvement in the practical performance using more samples is less and less efficient as the sample ratio increases. (2) The optimal practical relevant predictive ratio $r_p(k^{opt})$ for all methods is less than 1, i.e., the practical predictive performance of using fewer samples yields even better predictive performance than using the complete historical data. 

\begin{table}[h]
\centering
\caption{Optimal PaEBack sample size and dual efficiency for the stock return data. Here, the optimal PaEBack sample size k is selected based on forecasting error evaluated by the test RMSE. \label{tab:stockres}}
\begin{tabular}{@{}ccccccc@{}}
  \hline \toprule
     \multicolumn{1}{c}{\multirow{2}{*}{$Methods$}}& & \multicolumn{2}{c}{$RMSE (\times 10^{-2})$} & \phantom{a}&   \multicolumn{2}{c}{Dual Efficiency} \\ \cmidrule{3-4} \cmidrule{6-7} 
 & $k^{opt}$ & $RMSE(k^{opt})$ & $RMSE(n)$ & & $r_p(k^{opt})$ & $r_s(k^{opt})$ \\ \midrule
AR &    418 &   1.62898 & 1.70313 &  & 0.95646 & 0.418 \\ 
  ARLasso &     24 &   1.56011 & 1.71907 &  & 0.90753 & \textbf{0.024} \\ 
  ARElas &     81 &    1.63577 & 1.70517 &  & 0.95931 & 0.081 \\ 
  ARIMA &    208 &   \textbf{1.51133} & 1.69494 &  & \textbf{0.89167} & 0.208 \\ 
  GARCH &     49 &   1.65728 & 1.71379 &  & 0.96703 & 0.049 \\ 
  AGARCH &     49 &   1.65699 & 1.71386 &  & 0.96681 & 0.049 \\ 
  gjrGARCH &     49 &   1.65645 & \textbf{1.69384} &  & 0.97793 & 0.049 \\ 
  AgjrGARCH &     49 &   1.65796 & 1.70225 &  & 0.97398 & 0.049 \\  \bottomrule
\end{tabular}
\end{table}

In addition to visualization, Table \ref{tab:stockres} displays the exact practical optimal PaEBack sample size $k^{opt}$ and corresponding dual efficiency. As shown in Table \ref{tab:stockres}: (1) The practical predictive accuracy can be improved for all methods using less sample size $k^{opt}$. (2) When using the full sample, the combination of the ARIMA-GARCH model yields the best predictive performance with $RMSE = 1.6938$. However, all methods using PaEBack optimal sample size $k^{opt}$ achieve better $RMSE$ than 1.6938, demonstrating the PaEBack framework's effectiveness. (3) Looking at $r_p(k^{opt})$ which demonstrates the relative predictive ratio, we could see higher $r_p(k^{opt})$ for GARCH-related methods which assume complex features while the $r_p(k^{opt})$ is lower for simpler methods.

\subsection{Nowcasting of COVID-19 Confirmed Cases} \label{sec:covid}

Forecasting confirmed cases of Coronavirus has been challenging due to its extreme uncertainty and non-stationarity. Hence, we apply the PaEBack framework here and investigate its performance on twenty statistical and machine learning forecasting models as suggested in \cite{chakraborty2022nowcasting} and listed in Figure \ref{fig:tsmet}.

\begin{figure}[h]
    \centering
    \begin{minipage}[t]{\linewidth}
    \centering
    \includegraphics[width=.9\textwidth]{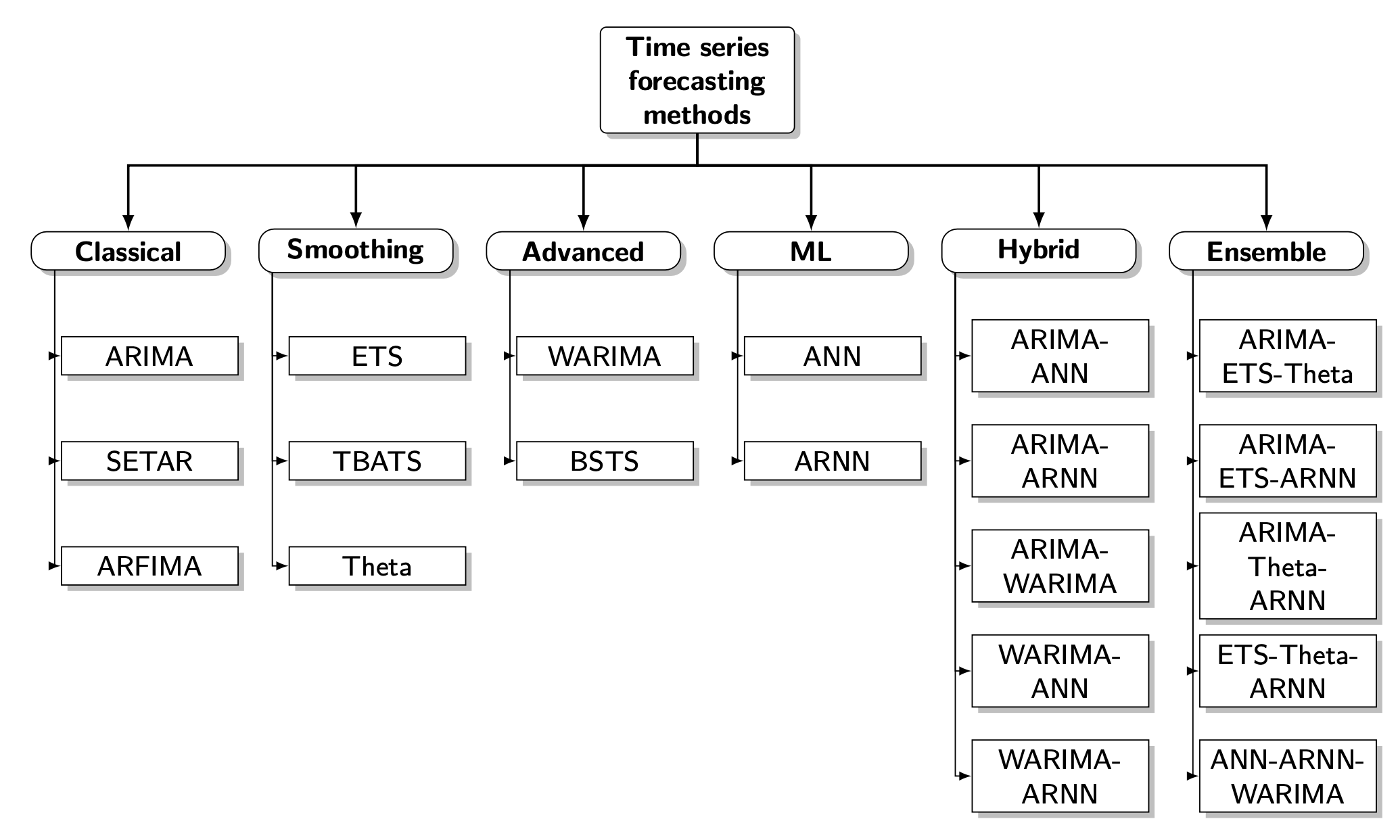}
    \caption{A systemic view of the various forecasting methods used in \cite{chakraborty2022nowcasting} with a detailed description in Appendix \ref{sec:covidsupp}. 
}.
    \label{fig:tsmet}
    \end{minipage}
\end{figure}

Due to the practical need to provide reasonable suggestions to policymakers, we replicated the exact analysis on the complete training set as in \cite{chakraborty2022nowcasting} and presented both the RMSE and the SMAPE for the 30-day ahead forecast in Table \ref{tab:covid}. The models are trained on the exact USA data with a 210-day complete history from Jan 20, 2020, to Aug 16, 2020. The forecasting performances are compared on the same test set with $h=30$ from Aug 17, 2020, to Sep 15, 2020. The "full" column represents the result using the complete training set with a 210-day history. The optimal development sample size $k$ is selected among $\{61, 62, \cdots, 210\}$ using the PaEBack framework.

Similarly to Figure \ref{fig:stockresult}, we multiply the predictive ratio with a constant $MSE(n)$ for each model on the y-axis to make the results visually comparable in their predictive performance. Still, the curve above the red dotted line means $r_p(k)>1$, with $MSE(k)>MSE(n)$. In contrast, the curve under the red dotted line represents $r_p(k)<1$, indicating better practical predictive performance when using fewer samples. Figure \ref{fig:covid20} visualizes the dual PaEBack efficiency of twenty methods in forecasting the confirmed case of Coronavirus in the US with a $h=30$ horizon.

\begin{figure}[h] 
     \centering
     \begin{subfigure}[b]{0.58\textwidth}
         \centering
        \centering
        \includegraphics[width=\textwidth]{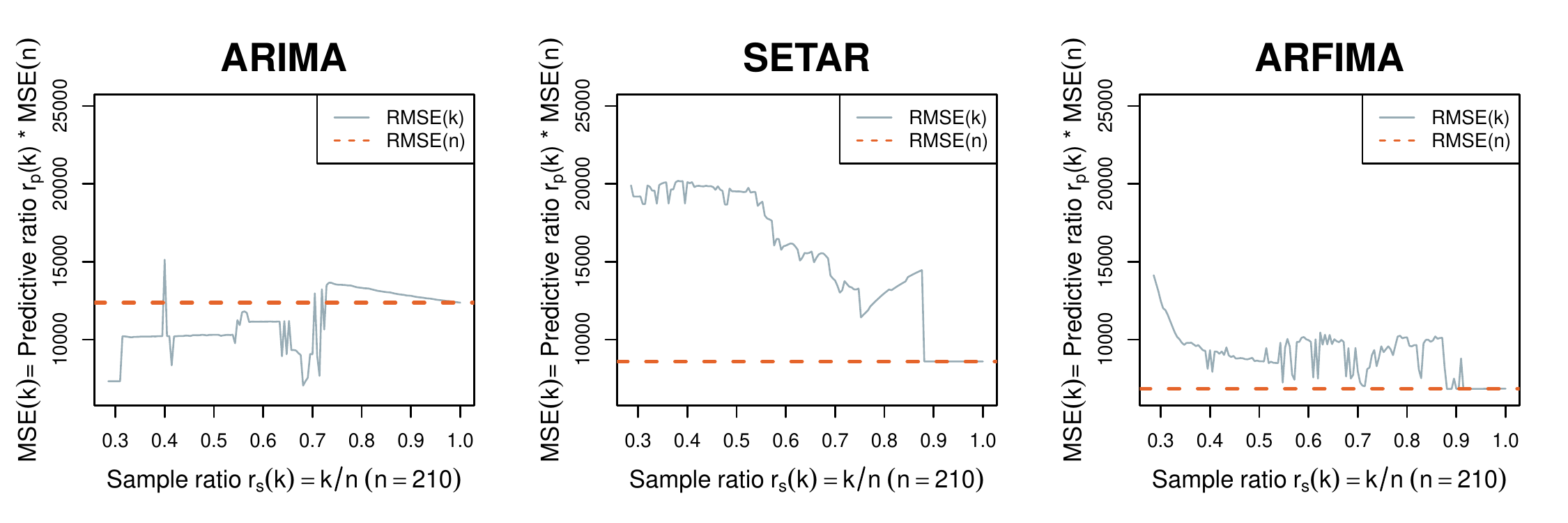}
        \caption{Classical}
        \label{fig:classic}
     \end{subfigure}
     \hfill
     \begin{subfigure}[b]{0.4\textwidth}
        \centering
        \includegraphics[width=\textwidth]{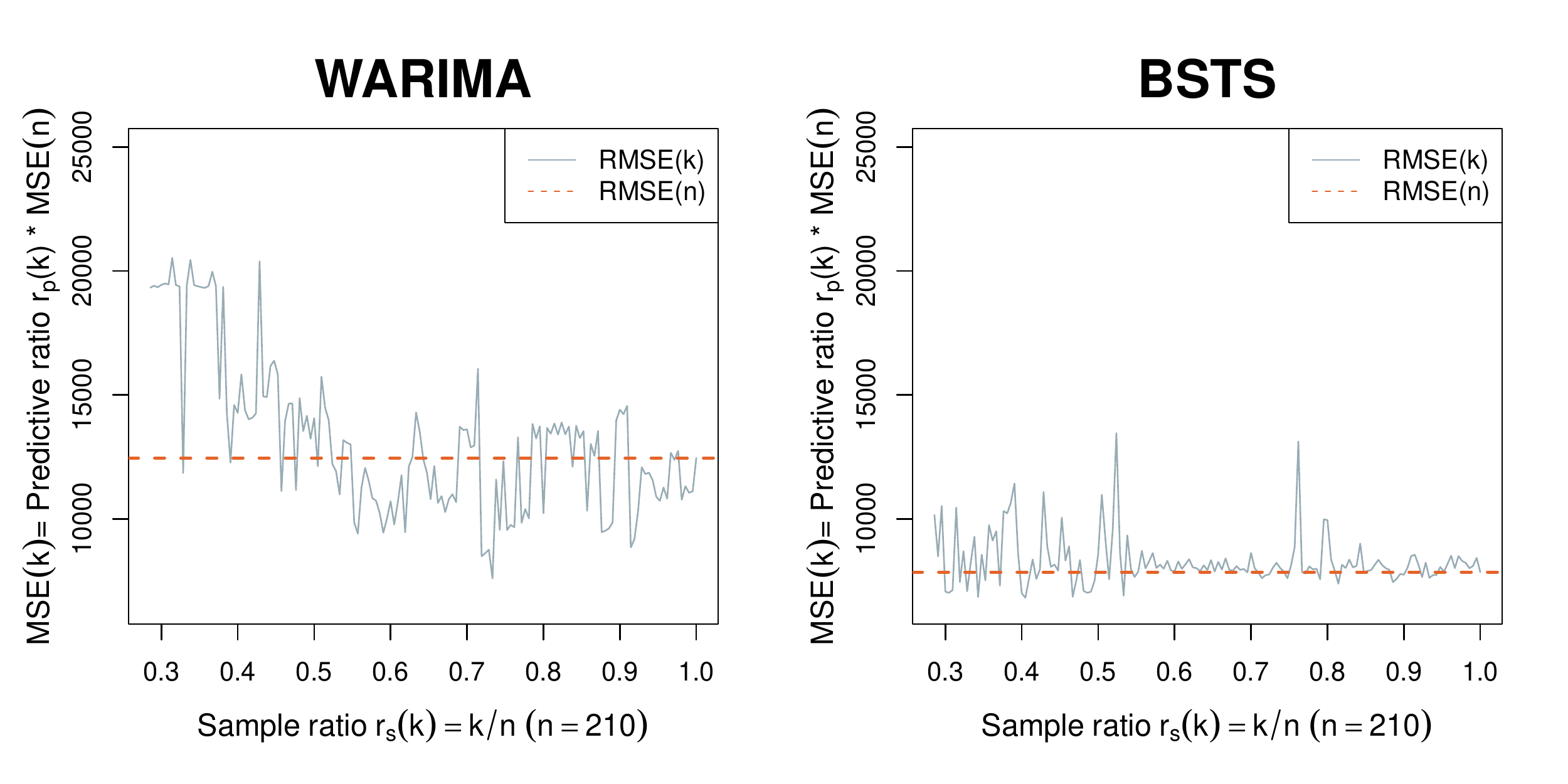}
        \caption{Advanced}
        \label{fig:advanced}
     \end{subfigure}
     \hfill
     \begin{subfigure}[b]{0.58\textwidth}
         \centering
        \centering
        \includegraphics[width=\textwidth]{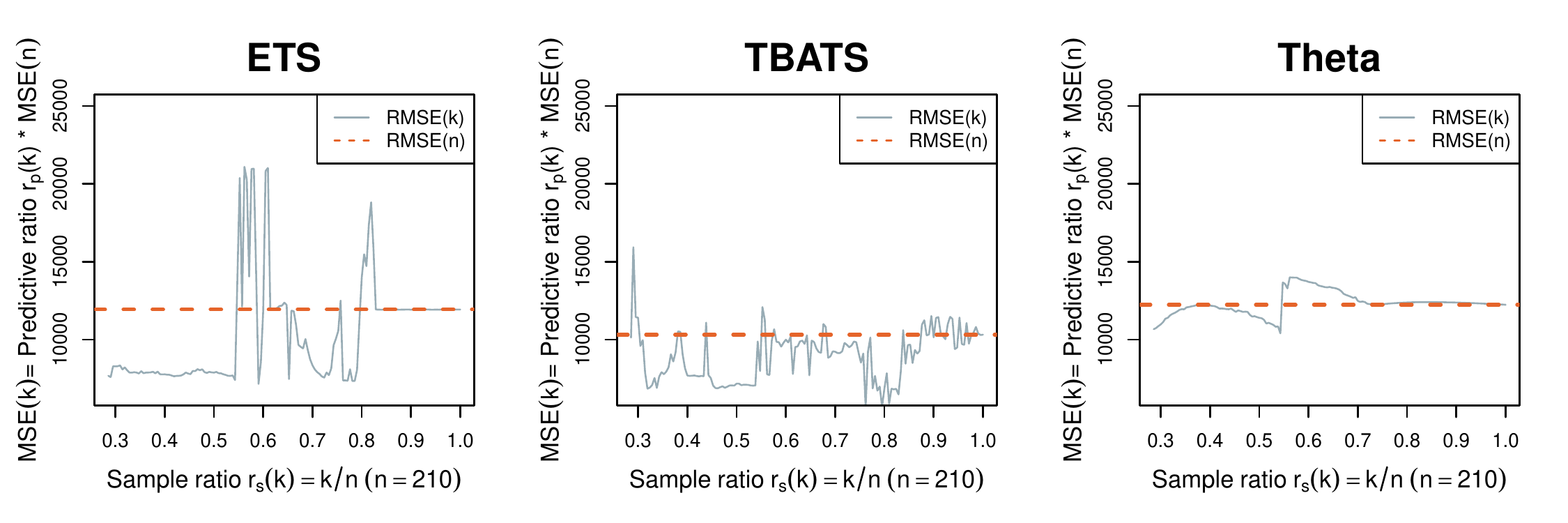}
        \caption{Smoothing}
        \label{fig:smooth}
     \end{subfigure}
     \begin{subfigure}[b]{0.4\textwidth}
         \centering
        \centering
        \includegraphics[width=\textwidth]{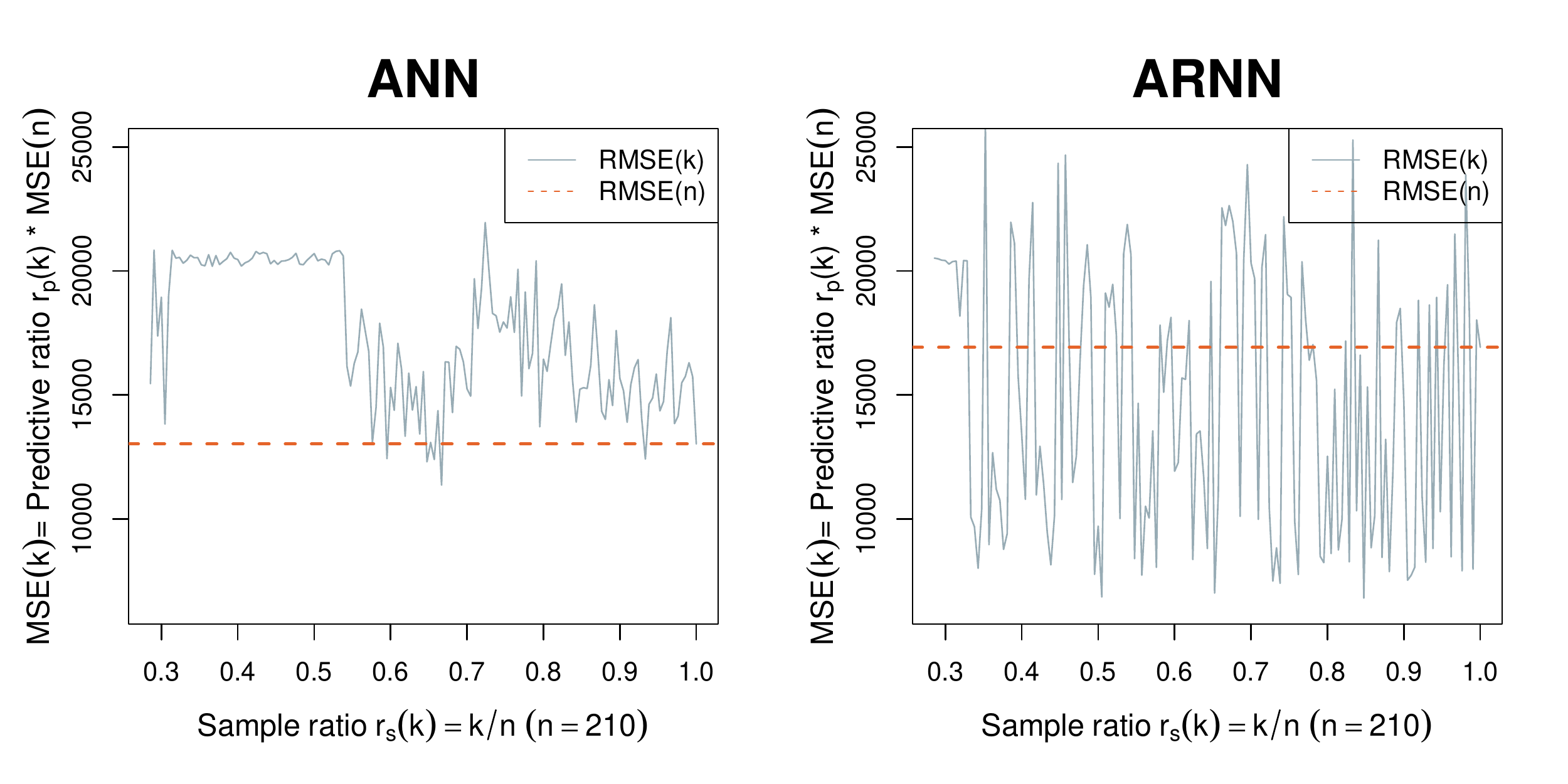}
        \caption{ML}
        \label{fig:ml}
     \end{subfigure}
     \begin{subfigure}[b]{0.95\textwidth}
         \centering
        \centering
        \includegraphics[width=\textwidth]{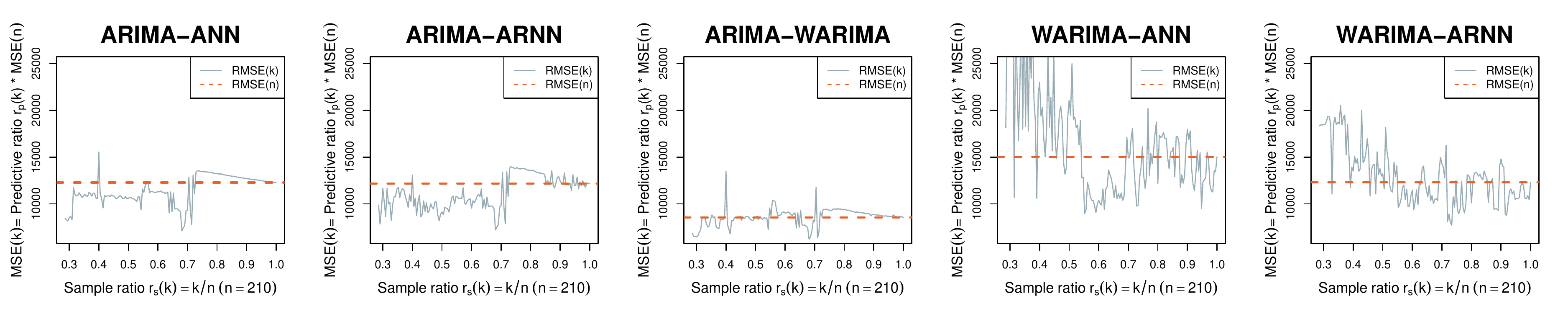}
        \caption{Hybrid}
        \label{fig:hybrid}
     \end{subfigure}
     \begin{subfigure}[b]{0.95\textwidth}
         \centering
        \centering
        \includegraphics[width=\textwidth]{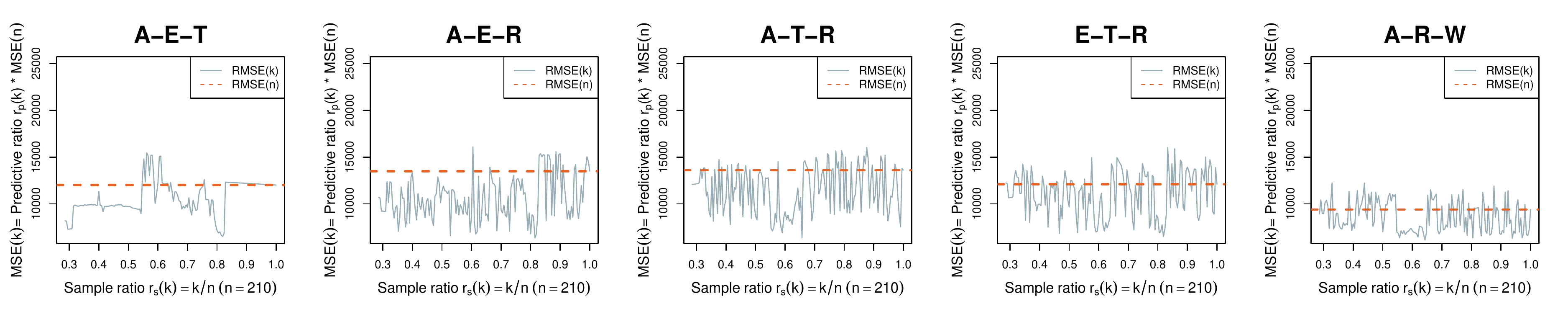}
        \caption{Ensemble}
        \label{fig:ensemble}
     \end{subfigure}
        \caption{The Pareto optimal efficiency plot for comparing twenty methods used in \cite{chakraborty2022nowcasting}. Here the y-axis is $r_p(k)$ multiplied by $MSE(n)$ to compare the performance of different methods.}  
    \label{fig:covid20}
\end{figure}

Given the results in Figure \ref{fig:covid20}, we can observe: 

(1) The red dotted line ($MSE(n)$) of almost all methods is above some part of the predictive curve using fewer PaEBack samples, which means PaEBack subsamples can yield better practical predictive performance than using complete historical data for most models. 

(2) For complex models such as ANN, which has a large number of parameters to tune, or the threshold-based method SETAR, which needs a large size of training data to decide the threshold accurately, the practical predictive performance of using less sample $k$ is not as good as utilizing complete data. However, the performance of using the total historical data is also not promising compared with other methods. For more exact comparisons, we have exhibited numerical performances in Table \ref{tab:covid}.

\begin{table}[h] 
\centering
\caption{Forecasting results for the USA test set with a 30-day horizon. We select the optimal development sample size k by minimizing the test RMSE recorded as the RMSE optim. Then we calculate the efficiency by RMSE optim / RMSE full, similarly for SMAPE.}
		\label{tab:covid}
\begin{tabular}{@{}cc | crrc | cccc@{}} \\ \toprule
   \multicolumn{2}{c}{\multirow{2}{*}{$Methods$}}&  & \multicolumn{2}{c}{$RMSE (\times 10^{-3})$} & & &
     \multicolumn{2}{c}{$SMAPE$} & \\ \cmidrule{4-5} \cmidrule{8-9} 
 & & $k^{opt}$ & {$(k^{opt})$} & {$(k=n)$} &  {$r_p(k^{opt})$} &   $k^{opt}$ & {$(k^{opt})$} & {$(k=n)$} &  {$r_p(k^{opt})$} \\ \midrule
 \multirow{3}{*}{Classical}  &
  ARIMA &  143 & 7.044 & 12.370 & 0.569 &     143 & 0.147 & 0.243 & 0.606 \\ 
  & SETAR &  205 & 8.594 & 8.594 & 1.000 &     205 & 0.173 & 0.173 & 1.000 \\ 
  & ARFIMA &  185 & \cellcolor[gray]{0.9} {6.831} & \cellcolor[gray]{0.9} \textbf{6.847} & 0.998 &     210 & \cellcolor[gray]{0.9} 0.144 & \cellcolor[gray]{0.9} \textbf{0.144} & 1.000 \\  \midrule
  \multirow{3}{*}{Smoothing} &
   ETS &  124 & 7.156 & 11.930 & 0.600 &     124 & 0.147 & 0.235 & 0.625 \\ 
  & TBATS &  160 & \cellcolor[gray]{0.9} \textbf{5.672} & 10.314 & 0.550 &     167 & \cellcolor[gray]{0.9} \textbf{0.111} & 0.207 & 0.536 \\ 
  & Theta &  114 & 10.398 & 12.234 & 0.850 &     114 & 0.196 & 0.232 & 0.841 \\ 
\midrule
  \multirow{2}{*}{Advanced} &
    WARIMA &  154 & 7.597 & 12.455 & 0.610 &     154 & 0.156 & 0.275 & 0.568 \\ 
  & BSTS &   \textbf{85} & \cellcolor[gray]{0.9} 6.821 & 7.855 & 0.868 &      \textbf{72} & \cellcolor[gray]{0.9} 0.141 & 0.158 & 0.893 \\ 
\midrule
  \multirow{2}{*}{ML} &
   ANN &  140 & 11.363 & 13.023 & 0.873 &     140 & 0.220 & 0.246 & 0.894 \\ 
  & ARNN &  178 & \cellcolor[gray]{0.9} 6.806 & 16.921 & \textbf{0.402} &    178 & \cellcolor[gray]{0.9} 0.134 & 0.330 & 0.408 \\  \midrule
  \multirow{5}{*}{Hybrid} &
    A-ANN &  143 & 7.100 & 12.282 & 0.578 &     143 & 0.148 & 0.239 & 0.619 \\ 
  & A-ARNN &  143 & 7.207 & 12.175 & 0.592 &     143 & 0.151 & 0.229 & 0.662 \\ 
  & A-WA &  143 & \cellcolor[gray]{0.9} 6.220 & 8.559 & 0.727 &     149 & \cellcolor[gray]{0.9} 0.128 & 0.170 & 0.749 \\ 
  & WA-ANN &  116 & 8.969 & 15.031 & 0.597 &     124 & 0.175 & 0.370 & 0.471 \\ 
  & WA-ARNN &  154 & 7.725 & 12.294 & 0.628 &     153 & 0.155 & 0.269 & 0.575 \\ \midrule
  \multirow{5}{*}{Ensemble} &
  A-E-T &  172 & \cellcolor[gray]{0.9} 6.516 & 12.014 & 0.542 &     172 & \cellcolor[gray]{0.9} 0.134 & 0.235 & 0.571 \\ 
  & A-E-R &  171 & \cellcolor[gray]{0.9} 6.348 & 13.493 & 0.470 &     172 & \cellcolor[gray]{0.9} 0.124 & 0.273 & \textbf{0.456} \\ 
  & A-T-R &  138 & \cellcolor[gray]{0.9}  6.359 & 13.604 & 0.467 &     138 & \cellcolor[gray]{0.9}  0.135 & 0.272 & 0.498 \\ 
  & E-T-R &  172 & \cellcolor[gray]{0.9} 6.496 & 12.101 & 0.537 &     172 & \cellcolor[gray]{0.9}  0.140 & 0.242 & 0.578 \\ 
  & A-R-W &  135 & \cellcolor[gray]{0.9} {6.137} & 9.399 & 0.653 &    135 & \cellcolor[gray]{0.9} {0.125} & 0.189 & 0.662 \\  \bottomrule
\end{tabular}
\end{table}

As Table \ref{tab:covid} shows, ARFIMA has the best predictive performance using full historical data with $RMSE (\times 10^{-3})$ around 6.847. However, after applying the PaEBack technique to all twenty models, TBATS, BSTS, the hybrid model of ARIMA-WARIMA, and all ensemble models yield better practical predictive performance while using fewer PaEBack samples (highlighted in grey). This improvement has demonstrated the effectiveness of the PaEBack technique in providing practically efficient forecasts even for non-stationary and challenging time series data such as Coronavirus.

\section{Discussions and Future Directions}
\label{sec:con}
This paper presents Pareto-Efficient Backsubsampling for Time Series data (PaEBack), which adopts a dual efficiency framework for short-term time series forecasting. While long historical time series can enhance forecasting accuracy for stationary series, this may not hold true for locally stationary or non-stationary scenarios. By employing the PaEBack framework, users can determine an appropriate subsample size of training data that achieves a balance between high forecasting accuracy and reduced data storage and processing requirements, even when the time series is assumed to be stationary.

The general framework of PaEBack is applicable universally in practice as it accommodates any appropriate forecasting model and evaluating criteria. However, for clarity, we establish the theoretical foundation for the concepts of Pareto optimal efficiency, practically predictive irrelevancy, and the asymptotically optimal PaEBack sample size based on AR time series approximation. We introduce the PaEBack-SW method for parameter selection in model training and incorporate adjusted adaptive weights to capture time series characteristics effectively. Through simulation studies, we observe the primary pattern of the efficiency curve under the oracle setting and investigate the effects of different order selection methods and model misspecification. Real-world illustrations include the comprehensive investigation of popular methods for forecasting the log return of the stock market and forecasting confirmed Coronavirus cases with challenges on highly non-stationarity.

Several extensions and theoretical developments are possible as a part of future work. Notably, the PaEBack method is not restricted to specific models or discrepancy criteria. As for future directions, the initial AR approximation can be extended to other appropriate (non-linear autoregressive) models, and the prediction criterion defined as MSE can be replaced with alternative metrics such as RMSE, MAP, SMAPE, and more without sacrificing generalization. Further research on non-linear AR processes or local stationary process models would be valuable to expand upon these findings.

Aside from the sample efficiency, we may also extend the Pareto optimality to computational complexity. Notice that if we take the Yuler-Walker estimation of an AR($p$) process as an example, the computational complexity can be approximated as $\mathcal{O}(n)$:
(i) $\mathcal{O}(p^2)$ for the $p$ by $p$ Toeplitz matrix inversion, and (ii) $\mathcal{O}(np)$ for the autocorrelation calculation, involving multiplication of the $p$ by $n$ lagged history and a data vector of length $n$. 

Consequently, if we define the computational efficiency $r_c(k) = \frac{\mathcal{O}(k)}{\mathcal{O}(n)}$ as the ratio of computational complexity using the past $k$ sample $X_{(n-k+1):k}$ to that of utilizing the complete data $X_{1:n}$, then $r_c(k) = \mathcal{O}(r_s(k))$, i.e., the concept of optimizing the computational complexity is then equivalent to optimizing the sample efficiency. Further research on this area may also be valuable.

\spacingset{.5} 
\bibliographystyle{agsm} 
\bibliography{ts.bib}  

\spacingset{1.5}

\appendixpagenumbering

\begin{appendices}

\section{Asymptotic PaEBack Predictive Ratio}	\label{appA}

In this Appendix, we provide the detailed proof of Theorem \ref{thm:1} (Asymptotic PaEBack predictive ratio) and Theorem \ref{thm:1} (Asymptotic PaEBack subsampling size) in Appendix \ref{appA1} and a straightforward illustration of the ratio computation in Appendix \ref{appA2}. A few supporting results are stated as Lemmas, and then used to establish the main result.

\subsection{Proof of Theorem \ref{thm:1}} \label{appA1}
Assume time series \{$X_t: t=1, 2, \cdots$\} follows the stationary AR(p) process with zero mean and finite variance which can be expressed as:
\begin{equation} \label{eq1:ar2def}
X_{t}={\phi}_{1} X_{t-1}+{\phi}_{2} X_{t-2} + \cdots +{\phi}_{p} X_{t-p} +  \epsilon_{t}, \quad \epsilon_{t}  \sim \text{WN}\left(0, \sigma^2\right), \forall t > p
\end{equation}
Denote the linear least predictor of $X_{n+h}$ based on the true coefficient $\phi$ as ${X_n^h}=E(X_{n+h}| X_{1:n} )$, which minimizes the mean square forecasting error.

\begin{definition} \label{def:a1h}
    Define vector $\mathbf{a}(h)=(a_1(h), a_2(h), \cdots, a_p(h))^{T}$ to facilitate the expression of $h$-step forecast, where
    \begin{equation} \label{eq:xnhorg}
    \left \{
    \begin{array}{c}
         \begin{aligned}
    {X}_{n}^{1} & = {\phi}_{1} X_{n} + {\phi}_{2}  X_{n-1} + \cdots + {\phi}_{p}   X_{n+1-p} \stackrel{}{=} a_1(1) X_n + a_2(1)X_{n-1} +\cdots ;\\
    {X}_{n}^{2} & = {\phi}_{1}   {X}_{n}^{1} + {\phi}_{2}   X_{n} + \cdots + {\phi}_{p}   X_{n+2-p} \stackrel{}{=} a_1 (2) X_n + a_2(2) X_{n-1} + \cdots ; \\
    \vdots\\
    {X}_{n}^{h} & = {\phi}_{1}   {X_{n}^{h-1}} + \cdots + {\phi}_{p}   {X_{n}^{h-p}}\stackrel{}{=} a_1(h) X_{n} + a_2(h)X_{n-1}+ \cdots.
    \end{aligned}
    \end{array}
    \right .
    \end{equation}
\end{definition}

Notice that the predictors $X_n^h$s are obtained recursively based on the past p observations $\mathcal{I}_{n-p+1}^n$ and that $a_1(h)$ has the expression in Lemma \ref{prop:a1h} with $a_1(0)=1$, $a_1(1)=\phi_1$, $a_1(2)=\phi_1^2+\phi_2$, $a_1(3)=\phi_1^3+2\phi_1\phi_2+\phi_3$, etc. Besides, $a_i(h)$ is defined as the coefficient for $X_n^h$ with regard to observation $X_{n-i+1}$. 

\begin{lemma} \label{prop:a1h}
    The iterative function $a_1(h)$ defined in Definition \ref{def:a1h} depends on the true $\phi$ through the following expression:
	\begin{align}
	a_1(h) = \mathbbm{1}(h = 0) +  \mathbbm{1} {(h \geq 1)} \left\{ 
	{\phi}_{1} a_1(h-1) + 
	\cdots +  {\phi}_{p} a_1(h-p)
	\right\}
	\label{eq:19}
	\end{align}
	where $\mathbbm{1} {(\cdot)}$ is the indicator function.
\end{lemma}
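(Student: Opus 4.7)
\textbf{Proof Proposal for Lemma \ref{prop:a1h}.}

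The plan is to proceed by induction on $h$, leveraging the recursive construction of the $h$-step ahead predictor in Definition \ref{def:a1h} and then reading off the coefficient of $X_n$ from both sides. The key observation is that the recursion in Equation (\ref{eq:xnhorg}) can be written uniformly as
\begin{equation*}
X_n^h = \phi_1 X_n^{h-1} + \phi_2 X_n^{h-2} + \cdots + \phi_p X_n^{h-p},
\end{equation*}
where by convention $X_n^{0} = X_n$ and $X_n^{-j} = X_{n-j}$ for $j \geq 1$ (so these negative-superscript terms are \emph{observed} values, not predictors). Under this convention, the coefficient of $X_n$ in $X_n^{-j}$ is $0$ whenever $j \geq 1$, which I will encode by extending the definition $a_1(h) = 0$ for $h < 0$.

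For the base case $h = 0$, we have $X_n^0 = X_n$ by convention, so the coefficient of $X_n$ is exactly $1$, matching $\mathbbm{1}(h = 0) = 1$. For the inductive step, suppose the recursion holds for all nonnegative indices less than $h$, and fix some $h \geq 1$. Applying the recursive form above and collecting the coefficient of $X_n$ on each side, the coefficient on the left is $a_1(h)$ by Definition \ref{def:a1h}, while on the right it is
\begin{equation*}
\phi_1 a_1(h-1) + \phi_2 a_1(h-2) + \cdots + \phi_p a_1(h-p),
\end{equation*}
where each $a_1(h-i)$ with $h - i \geq 0$ is the coefficient of $X_n$ in $X_n^{h-i}$ (known from prior steps of the induction), and each $a_1(h-i)$ with $h - i < 0$ vanishes because the corresponding $X_n^{h-i} = X_{n+h-i}$ is an observation other than $X_n$. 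Matching the two sides gives exactly the claimed recursion, with the indicator $\mathbbm{1}(h \geq 1)$ ensuring the formula activates only past the base case.

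The argument is essentially bookkeeping, so I do not anticipate a substantive obstacle; the one point that requires care is the boundary behavior when $h - i < 0$, where the predictor symbol $X_n^{h-i}$ must be interpreted as the observation $X_{n+h-i}$ rather than a random linear combination. Making this convention explicit upfront (equivalently, setting $a_1(j) = 0$ for $j < 0$) cleanly justifies dropping those terms from the recursion and lets the induction go through without case splits on whether $h \leq p$ or $h > p$.
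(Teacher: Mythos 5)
Your proof is correct. Note, though, that the paper itself does not prove this lemma at all: immediately after Lemma \ref{prop:error} it states that the proofs of these lemmas are omitted since the calculations follow from results in \cite{yamamoto1976asymptotic}, which treats the asymptotic prediction theory of fitted AR models. Your argument therefore supplies a self-contained alternative that the paper delegates to the literature: a strong induction on $h$, writing the predictor recursion uniformly as $X_n^h=\sum_{i=1}^p\phi_i X_n^{h-i}$ with the conventions $X_n^0=X_n$ and $X_n^{-j}=X_{n-j}$, and matching the coefficient of $X_n$ on both sides. The one point deserving care is exactly the one you flag: for $h-i\leq 0$ the symbol $X_n^{h-i}$ is an observation rather than a predictor, so its ``coefficient of $X_n$'' is $\mathbbm{1}(h-i=0)$, and extending $a_1(j)=0$ for $j<0$ makes the displayed recursion valid without splitting into the cases $h\leq p$ and $h>p$ (the lemma as stated implicitly uses this same convention, and your values agree with the paper's listed ones, e.g.\ $a_1(2)=\phi_1^2+\phi_2$ and $a_1(3)=\phi_1^3+2\phi_1\phi_2+\phi_3$). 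What your route buys is an elementary, purely combinatorial verification tied directly to Definition \ref{def:a1h}; what the citation route buys is that the same reference also covers the companion results (Lemmas \ref{prop:sigma2h} and \ref{prop:error}), whose variance calculations genuinely require the asymptotic machinery, whereas your bookkeeping argument covers only this coefficient recursion.
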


\begin{lemma} \label{prop:sigma2h}
	The MSE of the linear predictor ${X_n^h}$ is given by
	\begin{equation}
	\sigma^2_h = \sigma^2 \sum_{j=0}^{h-1}{a_1(j)}^2,
	\end{equation}
	where $a_{1(j)}$'s are given in Lemma \ref{prop:a1h}.
\end{lemma}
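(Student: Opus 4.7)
The plan is to prove the formula by first establishing an explicit moving-average representation of the $h$-step-ahead forecast error and then reading off its variance from the white-noise property of the innovations. Concretely, I would show by induction on $h$ that
\begin{equation*}
X_{n+h} - X_n^h \;=\; \sum_{j=0}^{h-1} a_1(j)\, \epsilon_{n+h-j},
\end{equation*}
after which taking variances immediately yields $\sigma_h^2 = \sigma^2 \sum_{j=0}^{h-1} a_1(j)^2$ since $\{\epsilon_t\}$ is white noise with variance $\sigma^2$.

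The base case $h=1$ is immediate: since $X_n^1 = \phi_1 X_n + \cdots + \phi_p X_{n+1-p}$ while $X_{n+1}$ has the same expression plus $\epsilon_{n+1}$, we get $X_{n+1} - X_n^1 = \epsilon_{n+1} = a_1(0)\,\epsilon_{n+1}$ because $a_1(0)=1$. For the inductive step, I would apply the AR recursion (\ref{eq1:ar2def}) to $X_{n+h}$ and the defining recursion (\ref{eq:xnhorg}) to $X_n^h$, obtaining
\begin{equation*}
X_{n+h} - X_n^h \;=\; \epsilon_{n+h} + \sum_{i=1}^{p} \phi_i \bigl(X_{n+h-i} - X_n^{h-i}\bigr).
\end{equation*}
The key observation is that whenever $h - i \leq 0$, the "future" predictor $X_n^{h-i}$ coincides with the already-observed $X_{n+h-i}$, so those terms drop out and only indices with $1 \leq i \leq \min(h-1,p)$ contribute. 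Substituting the induction hypothesis into each remaining term, swapping the order of summation via the change of variable $k = i+j$, and invoking Lemma \ref{prop:a1h} to identify $\sum_{i=1}^{\min(k,p)} \phi_i\, a_1(k-i) = a_1(k)$, the double sum collapses into $\sum_{k=0}^{h-1} a_1(k)\,\epsilon_{n+h-k}$, as desired.

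The main obstacle will be the bookkeeping in the inductive step: correctly tracking the double sum over $i$ (the AR lag) and $j$ (the earlier forecast-error index), respecting the boundary conventions $a_1(0)=1$ and $a_1(k)=0$ for $k<0$, and properly truncating the inner sum when $h<p$ so that Lemma \ref{prop:a1h} applies with its $\min$ cutoff. Once this indexing is settled, the remaining variance computation uses only $\mathrm{Cov}(\epsilon_s,\epsilon_t) = \sigma^2 \mathbbm{1}\{s=t\}$, which annihilates all cross terms and leaves $\sigma^2 \sum_{j=0}^{h-1} a_1(j)^2$.
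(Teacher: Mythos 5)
Your proposal is correct. The paper itself omits the proof of this lemma, deferring to the results of Yamamoto (1976); the argument it has in mind is exactly the one you give, namely the moving-average representation $X_{n+h}-X_n^h=\sum_{j=0}^{h-1}a_1(j)\,\epsilon_{n+h-j}$ of the $h$-step forecast error obtained by induction from the AR recursion and the defining recursion for $X_n^h$ (with $X_n^{h-i}=X_{n+h-i}$ for $h-i\le 0$), followed by the variance computation. Your handling of the index bookkeeping is sound: for the coefficient of $\epsilon_{n+h-k}$ you get $\sum_{i=1}^{\min(k,p)}\phi_i\,a_1(k-i)$, which equals $a_1(k)$ by Lemma \ref{prop:a1h} once $a_1(m)=0$ for $m<0$ is adopted, and since the white-noise assumption gives zero mean and uncorrelated innovations, the MSE equals the variance and all cross terms vanish, yielding $\sigma^2\sum_{j=0}^{h-1}a_1(j)^2$. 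In short, you have supplied a self-contained version of precisely the standard argument the paper delegates to the cited reference.
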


When the true coefficient $\phi$ is unknown, the Yule-Walker estimation based on the past $k$ observations $X_{(n-k+1):n}$ denoted as ${\phi}_{\cdot,k}$ is known to have the asymptotic multivariate normal distribution (\cite{box1970time}). Define the estimated forecast as $X_{n,k}^h$, then it is obtained by substituting the true $\phi$ in Eq. (\ref{eq:xnhorg}) with ${\phi}_{\cdot,k}$.

\begin{lemma} [asymptotic variance] \label{prop:error}
	Define the forecast error at $h^{th}$ prediction step using past $k$ observations as $e_{n}^h(k)=X_{n+h} - {\hat{X}_{n}^{h}}(k)$, the $p\times p$ matrix $M_h=\frac{\partial a(h)} {\partial \phi}$, and the vector $\vec{X}_n^{n-p+1}= (X_n, X_{n-1}, \cdots, X_{n-p+1})^T $. Then we can express
	\begin{align} \label{eq:ekt1}
	e_n^{h}(k) 
	& = -\sum_{j=0}^{h-1}a_1(j)\epsilon_{n+h-j} + (\hat{\phi}^{(k)}-\phi)^TM_h \vec{X}_n^{n-p+1},
	\end{align}
	with the asymptotic variance $v_n^{h}(k)$ given by
	\begin{align}
	v_n^{h}(k)  &= \sigma_h^2 + {w_h{(k)}} ,
	\end{align}
	where $\sigma^2_h$ is defined in Lemma \ref{prop:sigma2h}, ${w_h{(k)}}  =  k^{-1} \sigma^2 \cdot tr\{M_h^{T} \Gamma^{-1} M_h \Gamma \}$, and $\Gamma$ is the $p\times p$ covariance matrix $[\gamma(i-j)]_{i,j=1}^p$. For instance, when $h=1$ the asymptotic variance is $ \sigma_1^2 + \nu^2_1 = (1+ \frac{p}{k})\sigma^2$.
\end{lemma}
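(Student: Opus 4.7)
}
The plan is to split the forecast error into an \emph{approximation} piece, governed by the future innovations that cannot be predicted from the past, and an \emph{estimation} piece, governed by the deviation of the Yule--Walker estimate $\hat\phi^{(k)}$ from the true $\phi$. First I would write
\begin{align*}
e_n^h(k) \;=\; \bigl(X_{n+h}-X_n^h\bigr) + \bigl(X_n^h-\hat X_n^h(k)\bigr).
\end{align*}
For the first bracket, iterating the AR recursion forward from time $n$ and matching with the linear-predictor expansion (\ref{eq:xnhorg}) shows that $X_{n+h}-X_n^h$ is exactly the sum of those innovation terms that have not yet entered the information set at time $n$; collecting coefficients via Lemma \ref{prop:a1h} yields $X_{n+h}-X_n^h=-\sum_{j=0}^{h-1}a_1(j)\,\epsilon_{n+h-j}$ (up to the sign convention in the statement), which also immediately reproduces the MSE formula $\sigma_h^2$ of Lemma \ref{prop:sigma2h}.

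Next, for the estimation piece, I would view the predictor as a smooth function of $\phi$ via $X_n^h=a(h;\phi)^\top \vec X_n^{n-p+1}$. A first-order Taylor expansion around the true parameter, using $M_h=\partial a(h)/\partial \phi$, gives
\begin{align*}
X_n^h-\hat X_n^h(k) \;=\; \bigl(a(h;\phi)-a(h;\hat\phi^{(k)})\bigr)^\top \vec X_n^{n-p+1} \;=\; (\hat\phi^{(k)}-\phi)^\top M_h\, \vec X_n^{n-p+1} + o_p(k^{-1/2}),
\end{align*}
where the derivative $M_h$ can itself be obtained recursively by differentiating the Lemma \ref{prop:a1h} recursion. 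Plugging both pieces back gives the claimed decomposition (\ref{eq:ekt1}).

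For the variance, I would exploit that $\epsilon_{n+1},\dots,\epsilon_{n+h}$ are independent of the $\sigma$-field generated by $\{X_t:t\le n\}$, so they are independent of both $\vec X_n^{n-p+1}$ and $\hat\phi^{(k)}$. Hence the two terms in (\ref{eq:ekt1}) are orthogonal and
\begin{align*}
v_n^h(k)\;=\;\mathrm{Var}\!\Bigl(\sum_{j=0}^{h-1}a_1(j)\epsilon_{n+h-j}\Bigr)+\mathrm{Var}\bigl((\hat\phi^{(k)}-\phi)^\top M_h\vec X_n^{n-p+1}\bigr).
\end{align*}
The first variance is $\sigma_h^2$ by Lemma \ref{prop:sigma2h}. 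For the second, I would condition on $\vec X_n^{n-p+1}$ and use the classical Yule--Walker asymptotic covariance $k\,\mathrm{Var}(\hat\phi^{(k)})\to \sigma^2\Gamma^{-1}$ (e.g.\ Box--Jenkins), giving conditional variance $k^{-1}\sigma^2\,\vec X_n^{n-p+1\,\top} M_h^\top\Gamma^{-1}M_h \vec X_n^{n-p+1}$, and then taking expectations via the trace identity together with $\mathbb{E}[\vec X_n^{n-p+1}\vec X_n^{n-p+1\,\top}]=\Gamma$ to obtain $w_h(k)=k^{-1}\sigma^2\,\mathrm{tr}(M_h^\top\Gamma^{-1}M_h\Gamma)$. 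As a sanity check I would verify the stated $h=1$ case: there $a(1;\phi)=\phi$, so $M_1=I_p$ and $\mathrm{tr}(\Gamma^{-1}\Gamma)=p$, yielding $v_n^1(k)=(1+p/k)\sigma^2$.

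The main obstacle I anticipate is justifying the Taylor-linearization step rigorously: one must argue that the remainder term from the $a(h;\cdot)$ expansion is asymptotically negligible in variance, which requires controlling moments of $\vec X_n^{n-p+1}$ and of $\hat\phi^{(k)}-\phi$ (handled by stationarity and the standard $\sqrt{k}$-rate of Yule--Walker), as well as arguing that higher-order derivatives of $a(h;\phi)$ are bounded in a neighborhood of $\phi$, which follows inductively from Lemma \ref{prop:a1h} but involves some bookkeeping through the recursion depth $h$.
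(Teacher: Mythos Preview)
Your proposal is correct and follows precisely the approach of \cite{yamamoto1976asymptotic}, which is exactly what the paper defers to: the paper omits the proof of this lemma entirely, stating that the calculations follow from Yamamoto's results. Your decomposition into approximation and estimation errors, the Taylor linearization of $a(h;\phi)$ yielding $M_h$, the orthogonality of future innovations to $\sigma(X_t:t\le n)$, and the trace computation via $k\,\mathrm{Var}(\hat\phi^{(k)})\to\sigma^2\Gamma^{-1}$ together with $\mathbb{E}[\vec X\vec X^\top]=\Gamma$ are precisely the ingredients of that reference, so there is nothing to add.
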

We omit the proofs of the above Lemmas as most of the calculations follow by using the results in \cite{yamamoto1976asymptotic}. Recall that the predictive performance of the model built on $X_{(n-k+1):n}$ is evaluated by $MSE(k)=\frac{1}{h}\sum_{t=1}^h{e_n^t(k)}^2$, as defined in Eq. (\ref{eq:mse}) and the relative predictive ratio is $r_p(k)=\frac{MSE(k)}{MSE(n)}$. The following Lemma then obtains the asymptotic MSE and the corresponding predictive ratio\ref{prop:as}.

\begin{lemma} [asymptotic prediction efficiency] \label{prop:as}
	Based on Lemma \ref{prop:error}, the asymptotic MSE is then obtained as
	\begin{align}
	AMSE(k) = \frac{1}{h} \sum_{j=1}^h \nu_n^j(k) & = \sum_{j=1}^h \sigma_j^2 + \frac{1}{k} \sigma^2 \sum_{j=1}^h tr\{M_j^{T} \Gamma^{-1} M_j \Gamma \},
	\end{align}
	with the corresponding asymptotic ratio as 
	\begin{align} \label{eq:asr}
	Ar_p(k) = \frac{AMSE(k)}{AMSE(n)} &= 1 + (\frac{1}{k}-\frac{1}{n}) \cdot \left( \frac{\sum_{j=1}^h \sum_{i=1}^{j-1}{a_1(j)}^2}{  \sum_{j=1}^h tr\{M_j^{T} \Gamma^{-1} M_j \Gamma \}} +\frac{1}{n}  \right)^{-1}.
	\end{align}
\end{lemma}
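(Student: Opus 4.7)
The plan is to derive Lemma \ref{prop:as} directly from Lemma \ref{prop:error} in two stages: first consolidate the per-horizon asymptotic variances into an expression for $AMSE(k)$, and then perform a straightforward algebraic manipulation to recast the ratio $AMSE(k)/AMSE(n)$ in the displayed closed form. Since Lemma \ref{prop:error} has already supplied the decomposition $e_n^{j}(k) = -\sum_{i=0}^{j-1}a_1(i)\epsilon_{n+j-i} + (\hat{\phi}^{(k)}-\phi)^{T}M_{j}\vec{X}_n^{n-p+1}$ together with its asymptotic variance $v_n^{j}(k)=\sigma_j^{2}+w_j(k)$ with $w_j(k)=k^{-1}\sigma^{2}\,tr\{M_j^{T}\Gamma^{-1}M_j\Gamma\}$, the core inputs are already in hand.

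The first step is to identify $\mathbb{E}[(e_n^{j}(k))^{2}]$ with $v_n^{j}(k)$ asymptotically. The two summands in the decomposition from Lemma \ref{prop:error} are asymptotically uncorrelated: the innovations $\epsilon_{n+1},\dots,\epsilon_{n+j}$ are independent of $\hat{\phi}^{(k)}$ (which is built from $X_{(n-k+1):n}$) and of the conditioning vector $\vec{X}_n^{n-p+1}$, so the cross term vanishes in expectation, and Yule-Walker consistency kills the bias at rate $o(k^{-1/2})$. Substituting into the definition $MSE(k)=\mathbb{E}\bigl(\frac{1}{h}\sum_{t=1}^{h}(e_{n,k}^{t})^{2}\bigr)$ of equation \eqref{eq:mse} and exchanging expectation with summation gives
\begin{equation*}
AMSE(k)=\frac{1}{h}\sum_{j=1}^{h}v_n^{j}(k)=\frac{1}{h}\sum_{j=1}^{h}\sigma_j^{2}+\frac{1}{hk}\,\sigma^{2}\sum_{j=1}^{h}tr\{M_j^{T}\Gamma^{-1}M_j\Gamma\}.
\end{equation*}
(The displayed formula in the lemma has absorbed the $1/h$ into the bookkeeping constants $A$ and $B$ later used in Theorem \ref{thm:1}; the factor cancels in the ratio, so nothing is lost.)

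The second step is a one-line algebra. Writing $\widetilde{A}=\sum_{j=1}^{h}\sigma_j^{2}$ and $\widetilde{B}=\sigma^{2}\sum_{j=1}^{h}tr\{M_j^{T}\Gamma^{-1}M_j\Gamma\}$ and forming the ratio yields
\begin{equation*}
Ar_p(k)=\frac{AMSE(k)}{AMSE(n)}=\frac{\widetilde{A}+\widetilde{B}/k}{\widetilde{A}+\widetilde{B}/n}=1+\left(\frac{1}{k}-\frac{1}{n}\right)\left(\frac{\widetilde{A}}{\widetilde{B}}+\frac{1}{n}\right)^{-1}.
\end{equation*}
Substituting $\sigma_j^{2}=\sigma^{2}\sum_{i=0}^{j-1}a_1(i)^{2}$ from Lemma \ref{prop:sigma2h} into $\widetilde{A}$ cancels the common factor $\sigma^{2}$ in $\widetilde{A}/\widetilde{B}$, producing the displayed ratio of double sums in $a_1(i)^{2}$ over $tr\{M_j^{T}\Gamma^{-1}M_j\Gamma\}$ stated in equation \eqref{eq:asr}.

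The main obstacle is the reduction $\mathbb{E}[(e_n^{j}(k))^{2}]\sim v_n^{j}(k)$. This requires handling three technical items simultaneously: (i) negligibility of the Yule-Walker bias, (ii) vanishing of the cross expectation between future innovations and the plug-in estimation error, and (iii) control of the higher-order Taylor remainder when $a(h,\phi)$ is linearized in $\phi$ around the truth to isolate the sensitivity matrix $M_j$. All three are standard consequences of the $\sqrt{k}$-asymptotic normality of the Yule-Walker estimator under weak stationarity (cf.\ \cite{yamamoto1976asymptotic}), and once they are invoked the rest of the argument is purely algebraic.
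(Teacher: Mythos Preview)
Your proposal is correct and follows essentially the same route as the paper: the paper omits the proof of Lemma~\ref{prop:as} (deferring to \cite{yamamoto1976asymptotic}) and then, in the proof of Theorem~\ref{thm:1}(i), performs exactly your ``second step'' algebra $Ar_p(k)=\frac{A+B/k}{A+B/n}=1+\frac{1/k-1/n}{A/B+1/n}$. You are in fact more careful than the paper in spelling out why $\mathbb{E}[(e_n^j(k))^2]\sim v_n^j(k)$ and in flagging the harmless $1/h$ bookkeeping discrepancy.
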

From Eq. (\ref{eq:asr}), it is evident that the value of $Ar_p(k)$ is positive and greater than one, decreases as $k$ increases, and converges to 1 when $r_s(k)=k_n/n\rightarrow 1$ as $n\rightarrow\infty$. With the above results, we are now ready to complete the proof of the main result.

\begin{proof}[Proof of Theorem \ref{thm:1} (i)]
Denote $A=\sum_{j=1}^h \sigma_j^2$,  $B=\sigma^2 \sum_{j=1}^h tr\{M_j^{T} \Gamma^{-1} M_j \Gamma \}$, then the asymptotic MSE can be then expressed as $AMSE(k) =  A+\frac{1}{k}B,$ with corresponding asymptotic ratio as $Ar_p(k) = \frac{AMSE(k)}{AMSE(n)} {=} \frac{A+\frac{B}{k}}{A+\frac{B}{n}} = 1 + \frac{ \frac{1}{k} - \frac{1}{n} }{ \frac{A}{B} + \frac{1}{n} }$ $\xrightarrow{n\rightarrow\infty} 1+k\frac{B}{A}$.
\end{proof}

\begin{proof}[Proof of Theorem \ref{thm:1} (ii)]
Continue with the proof of Theorem \ref{thm:1} and notice that $Ar_p(k) \leq 1+\epsilon_n$ is equivalent to 
\begin{align} \label{Eq:knopt}
\frac{ \frac{1}{k} - \frac{1}{n} }{ \frac{A}{B} + \frac{1}{n} } &< \epsilon_n  \Longleftrightarrow  k > n \cdot \left( \frac{1}{1+\epsilon_n+n\epsilon_n \frac{A}{B}} \right) \stackrel{\Delta}{=}n\cdot c_n
\end{align}
Given $\epsilon_n \rightarrow 0$ and $n\epsilon_n \rightarrow \lambda>0$ as $n \rightarrow \infty$, we could find that the $c_n$ part of Equation (\ref{Eq:knopt}) is converging to $\frac{1}{1+\lambda \frac{A}{B}}$, and hence, $k^{opt} \sim n \left(  \frac{1}{1+\lambda \frac{A}{B}}  \right)$ proved.
\end{proof}

As for the ratio $\frac{A}{B}$, it does not depend on $\sigma^2$ but only on AR coefficients $\phi_j$ for $j=1, \cdots, p$. Also, we could notice that when $h=1$, $k^{opt}\sim n(\frac{1}{1+\epsilon_n+\lambda/p})$, and thus, with larger $p$, the optimal $k$ is expected to be larger.

\subsection{Illustration of calculating the ratio A/B for AR(5)}\label{appA2}

To illustrate with the given example shown in Figure \ref{fig:simu2} where $h=3$ and the AR(5) model has the AR coefficients as $(0.5, -0.4, 0.3, -0.2, 0.1)$ with $\sigma=1$, $A=\sum_{j=1}^h \sigma_j^2=\sigma^2 \sum_{j=1}^h \sum_{i=1}^{j-1}a_1(i)^2$, and $B=\sigma^2 \sum_{j=1}^h tr\{M_j^{T} \Gamma^{-1} M_j \Gamma \}$, we could have the ratio $\frac{A}{B}=\frac{\sum_{j=1}^h \sum_{i=0}^{j-1}a_1(i)^2}{\sum_{j=1}^h tr\{M_j^{T} \Gamma^{-1} M_j \Gamma \}} \stackrel{\Delta}{=}\frac{A_h}{\sum_{j=1}^h tr_{j}}$ with $A_3=\sum_{j=1}^3 \sum_{i=0}^{j-1}a_1(i)^2=3.5225$.

When $h=1$, it is handy to derive the general $a(1)$ and $M_1$ for any AR(p) process as   $a(1)=(\phi_1, \phi_2, \cdots, \phi_p)$ and $M_1=\frac{\partial{a(1)}}{\partial{\phi}}=I_{p \times p}$. Then $tr_1=tr{M_1^T\Gamma^{-1}M_1\Gamma}= p=5$.

When $h=2$, we could derive the general format of $a(2)$ and $M_2$ for any AR(p) process as $a(2)=(\phi_1^2+\phi_2,\cdots, \phi_1\phi_i+\phi_{i+1}, \cdots, \phi_1 \phi_{p-1}+\phi_p, \phi_1\phi_p)$  and 
\begin{equation}\label{eq:m2}
M_2=\frac{\partial{a(2)}}{\partial{\phi}}=\left( \begin{array}{ccccc}
2\phi_1 &1 &0 & \cdots & 0 \\
\phi_2 & \phi_1 & 1 & \cdots & 0 \\
\vdots & \vdots & \vdots & \ddots & \vdots \\
\phi_{p-1} & 0 & 0 & \cdots & 1 \\
\phi_p & 0 & 0 & \cdots & \phi_1
\end{array}\right).
\end{equation} 
Then we could obtain $tr_2=9.9468$.

When $h=3$, the general form of $a(3)$ and $M_3$ is complicated, and we only write them for the given AR(5) process as $a(3)=(\phi_1^3+2\phi_1\phi_2+\phi_3, \phi_1^2\phi_2+ \phi_1\phi_2+\phi_1\phi_3+\phi_4, \phi_1^2\phi_3+\phi_1\phi_4+\phi_2\phi_3+\phi_5, \phi_1^2\phi_4+\phi_1\phi_5+\phi_2\phi_4, \phi_1^2\phi_5+\phi_2\phi_5)$ and
\begin{equation} \label{eq:m3}
M_3=\frac{\partial{a(3)}}{\partial{\phi}}=\left( \begin{array}{ccccc}
3\phi_1^2+2\phi_2 & 2\phi_1 &  & \cdots & 0 \\
2\phi_1\phi_2+\phi_2+\phi_3 & \phi_1^2+\phi_1 &   & \cdots & 0 \\
\vdots & \vdots & \vdots & \ddots & \vdots \\
2\phi_1\phi_5 & \phi_5 &   & \cdots & \phi_1^2+\phi_2
\end{array}\right). 
\end{equation}
We could obtain $tr_3=8.1480$.

Therefore, for our specific example, the ratio $\frac{A}{B}=\frac{3.5225}{5+9.9468+8.1480}=0.1525$.

\subsection{Illustration of estimating the ratio A/B of the stock data }\label{appA3}
When we do not know the true value of the ratio $A/B$ for real data, we can still obtain consistent estimates using the YW estimate of the AR coefficients, $\phi_i$s. For illustration we use the example featured in Section \ref{sec:stock}, where we used the stock price of Apple with $k=0.2$, $n=200$ and $h=3$. Notice that the best model turned out to be AR(2) with estimated coefficients $\hat{\phi}= (-0.2446, 0.0571)$.

Recall the ratio $\frac{A}{B}=\frac{\sum_{j=1}^h \sum_{i=0}^{j-1}a_1(i)^2}{\sum_{j=1}^h tr\{M_j^{T} \Gamma^{-1} M_j \Gamma \}} \stackrel{\Delta}{=}\frac{A_h}{\sum_{j=1}^h tr_{j}}$.
To begin with the numerator, we have $\hat{a_1}(0)=a_1(0)=1$, $\hat{a_1}(1)=\hat{\phi}_1a_1(0)=\hat{\phi}_1=-0.2446$, $\hat{a_1}(2)=\hat{\phi}_1^2+\hat{\phi}_2=0.1170$, 
and hence $A_3=\sum_{j=1}^3 \sum_{i=0}^{j-1}a_1(i)^2=3.1334$. 
When $h=1$, $\hat{tr}_1=tr_1=  p=2$. When $h=2$ and $a(2)=(\phi_1^2+\phi_2, \phi_1\phi_2)$, we have
\begin{equation}
\hat{M_2}=\left( \begin{array}{cc}
2\phi_1  & 1  \\
\phi_2 & \phi_1
\end{array} \right)_{\phi=\hat{\phi}}
\end{equation}
and $tr_2=1.4993$. 
When $h=3$ and  $a(3)=(\phi_1^3+2\phi_1\phi_2, \phi_1^2\phi_2+\phi_1\phi_2)$, we have
\begin{equation}
\hat{M_3}=\left( \begin{array}{cc}
2\phi_1^2+2\phi_2  & 2\phi_1 \\
2\phi_1\phi_2+\phi_2 & \phi_1^2+\phi_1
\end{array} \right)_{\phi=\hat{\phi}}
\end{equation}
and $tr_3=0.4819$.
Hence, the estimated ratio for this specific Apple stock data is then $\frac{\hat{A}}{\hat{B}}=\frac{3.1334}{2+1.4993+0.4819}=0.7870$. 

Given the specific sample ratio from $r = 0.1$ to $r = 0.2$, we could use Eq. \ref{eq:askopt} to obtain the values of $\lambda$ in $[5.0824, 11.4354]$, which would correspond to the efficiency loss $\epsilon_n$ in the range of only $[0.005, 0.011]$.

\section{Comparison with Other Subsampling Method} \label{sec:fkc}
\cite{fukuchi1999subsampling} has discussed subsampling methods using sliding windows with size $k$, either overlapped or non-overlapped versions, to estimate the risk of prediction for time series data. However, we have shown that without separating the validation set, as shown in Figure \ref{fig:bs} using the proposed PaEBack strategy, the evaluation of the predictive sample can embrace a full historical sample size, as indicated by Figure \ref{fig:fukuchi}.

\begin{figure}[h]
	\centering
	\begin{minipage}[t]{.95\linewidth}
		\centering
		\includegraphics[width=\textwidth]{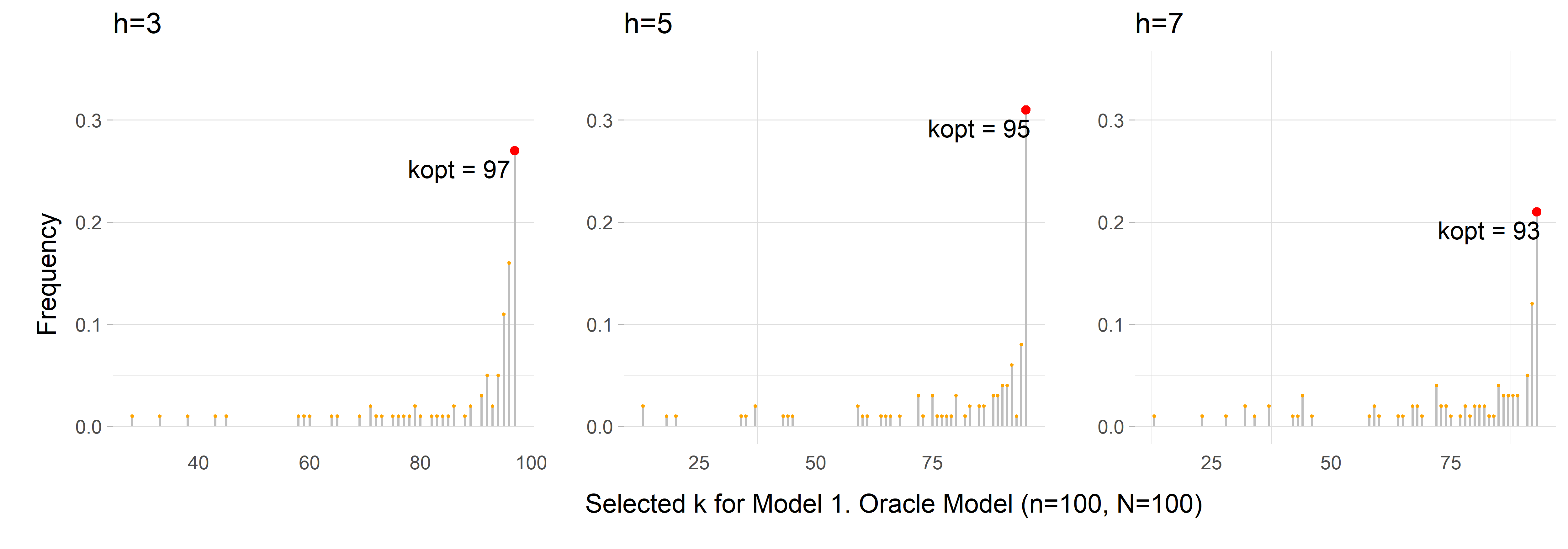}
	\end{minipage}%
	\centering
	\caption{Here the full sample size is $n+h=100$ with forecasting horizon h varying from $h=3$, $5$, to $h=7$ with 100 replicates. Simulated data is generated from AR(5) process with $\phi=c(0.5, -0.4, 0.3, -0.2, 0.1)^T$. Each training model $l$ is fitted on the development set $\mathcal{I}_{l}^{k+l-1}$ with window size equal to $k$, and evaluated on the $\mathcal{I}_{k+l}^{k+l+h-1}$. The optimal $k$ is selected to minimize the overall mean MSE for a given $k$.}
	\label{fig:fukuchi}
\end{figure}

Notice that for each window size $k$, using their strategy, there are $n-k-h+1$ models fitted, and thus, the optimal development size $k$ is selected such that
\begin{equation}
k^{opt} = \arg \min_k \frac{ \sum_{l=1}^{n-k-h+1} \frac{1}{h}\sum_{t=1}^h (X_{k+l-1+t} - \hat{X}_{k+l-1+t}^{(k)})^2}{n-k-h+1}.
\end{equation}
Moreover, we could notice that the testing sets overlap with each other, where the overall testing size $h \cdot (n-k-h+1)$ decreases as the window size $k$ increases. Thus, as shown in Figure \ref{fig:fukuchi}, the simulated results of the optimal $k$ would always embrace the largest possible window size, $k=n-h$, including the only training model and the only test set $\mathcal{I}_{n-h+1}^n$. However, this does not indicate \cite{fukuchi1999subsampling} as a not reliable subsampling method in model selection, but just not as effective as the PaEBack framework in determining the dual optimal development subsample size $k$ with more objective considerations.

\section{PaEBack with Non-Increasing Adjusted Adaptive Weight Algorithm} \label{sec:appC}
\begin{algorithm}[h]
	\caption{PaEBack Non-Increasing Adjusted Adaptive Weight Algorithm}
	\label{algo:wt}
	\begin{algorithmic}[1]
		\REQUIRE{Input vector $b$, which is the absolute value of the initial estimator $\beta^{ini}$. Set the adjusted $b^a=b$, pointer $=0$, index1 $=1$, index2 $=0$, and $|b|$ as the size of $b$.}
		\WHILE{pointer $<$ $|b|$}
		\STATE{upper $\gets$ $b_{index1}$}
		\FOR{$k\in\{2, \cdots, |b|\}$} 
		\IF{$b_k < b_{k+1}$} 
		\STATE index2 $\gets$ k; lower $\gets$ $b_{index2}$; break;
		\ENDIF
		\IF{$index2==index1$}
		\STATE $b^a$ $\gets$ rep( $b_{index1}$, $|b|$)
		\ENDIF
		\IF{index2$-$index1 $>$ 1}
		\FOR{i in 1:(index2-index1-1)}
		\STATE $b^a_{index1+i}$ $\gets$ $b^a_{index1}+ \frac{b_{index2}-b_{index1}}{index2-index1} \cdot i$
		\ENDFOR
		\ENDIF
		\ENDFOR
            \STATE{pointer $\gets$  pointer$+1$}
		\ENDWHILE
		\RETURN{Output the adjusted $b^a$ with non-negative values in non-increasing order.}
	\end{algorithmic}
\end{algorithm}

\section{Model Selection for the Log Return of Stock Prices} \label{sec:appD}

\begin{table}[H]
\centering
\caption{Bayesian information criterion (BIC) of selecting the $p$ and $q$ parameters for four models introduced in Section \ref{sec:stockm}: (v) GARCH(p,q), (vi) ARIMA(p,0,q)-GARCH(1,1), (vi) gjrGARCH(p,q), and (viii) ARIMA(p,0,q)-gjrGARCH(1,1). The models are trained on the log return of Amazon's stock price data with historical sample size $n=1000$. The smaller BIC, the better a model performs. The best values (smallest BICs) for each method (column) are bolded.}
\label{tab:garchbic}
\begin{tabular}{@{}rllrrrr@{}} 
 \\   \toprule
 & p & q & GARCH & AGARCH & gjrGARCH & AgjrGARCH \\ \midrule

(0,1) & 0 & 1 & -5.05198 & \cellcolor[gray]{0.9} \textbf{-5.26472} & -5.05198 & \cellcolor[gray]{0.9} \textbf{-5.27090} \\ 
  (0,2) & 0 & 2 & -5.04581 & -5.25908 & -5.04581 & -5.26569 \\ 
  (0,3) & 0 & 3 & -5.03919 & -5.25670 & -5.03919 & -5.26235 \\  [6pt]
  (1,0) & 1 & 0 & -5.14057 & -5.26470 & -1.35564 & -5.27084 \\ 
  (2,0) & 2 & 0 & -5.20139 & -5.25884 & 0.23821 & -5.26531 \\ 
  (3,0) & 3 & 0 & -5.21831 & -5.25611 & -5.20264 & -5.26194 \\ [6pt]
  (1,1) & 1 & 1 & \cellcolor[gray]{0.9} \textbf{-5.27141} & -5.26313 & \cellcolor[gray]{0.9} \textbf{-5.27715} & -5.26839 \\ 
  (1,2) & 1 & 2 & -5.26430 & -5.25094 & -5.27002 & -5.26200 \\ 
  (1,3) & 1 & 3 & -5.25723 & -5.25036 & -5.26290 & -5.25574 \\  [6pt]
  (2,1) & 2 & 1 & -5.26995 & -5.25699 & -5.27238 & -5.26208 \\ 
  (2,2) & 2 & 2 & -5.26304 & -5.25013 & -5.26548 & -5.25528 \\ 
  (2,3) & 2 & 3 & -5.25612 & -5.24368 & -5.25884 & -5.25074 \\  [6pt]
  (3,1) & 3 & 1 & -5.26992 & -5.25021 & -5.26663 & -5.25566 \\ 
  (3,2) & 3 & 2 & -5.26301 & -5.24364 & -5.25972 & -5.24816 \\ 
  (3,3) & 3 & 3 & -5.25610 & -5.24503 & -5.24705 & -5.24498 \\ \bottomrule
\end{tabular}
\end{table}


\section{Machine Learning Methods for COVID-19 Cases Nowcasting} \label{sec:covidsupp}

In this Section, we briefly introduce the twenty forecasting methods used in Section \ref{sec:covid} with further details available in \cite{chakraborty2022nowcasting}.

\paragraph{Classical} The classical methods include ARIMA, SETAR, and ARFIMA. (1): ARIMA is one of the most well-known linear models in time-series forecasting, typically with three parameters as ARIMA(p,d,q), where p and q stand for the order of AR and MA parts, respectively, and d represents the level of differencing to convert non-stationary data into stationary time series (\cite{box2015time}). (2): Self-exciting threshold autoregressive (SETAR) model has two parameters in SETAR (k,p) (\cite{tong1990non}). It allows parameter switching among AR(p) models with k+1 regimes. (3): Autoregressive fractionally integrated moving average (ARFIMA) model is the extension of the ARIMA by allowing non-integer values of the differencing parameter \cite{granger1980introduction}. A typical ARFIMA model has three parameters in ARFIMA(p, d, q), similar to the ARIMA model.

\paragraph{Smoothing} Smoothing methods include ETS, TBATS, and THETA. (1): Smoothing methods such as Exponential smoothing (\cite{winters1960forecasting}) are very effective in time series forecasting. The exponential smoothing state space model decomposes the time series into three-level components: the Error component (E), Trend component (T), and Seasonal component (S). (2): TBATS model uses exponential smoothing to deal with complex seasonal patterns. The name TBATS is the acronym for key features of the models: Trigonometric seasonality (T), Box-Cox Transformation (B), ARMA errors (A), Trend (T), and Seasonal (S) components (\cite{de2011forecasting}). (3): THETA method decomposes the original data into two or more theta lines and extrapolates them using forecasting models. The weighted averages of forecasts based on different theta are combined to obtain the final forecasts (\cite{assimakopoulos2000theta}).

\paragraph{Advanced} There are two advanced methods. (1): Wavelet-based ARIMA (WARIMA) transforms time series data and is most suitable for non-stationary data, unlike standard ARIMA. It applies Daubechies wavelets transformation and decomposition to the time series and removes the high-frequency components before fitting the ARIMA model to provide out-of-sample forecasts (\cite{aminghafari2007forecasting}). (2): The Bayesian structural time series (BSTS) model has been applied in \cite{scott2013predicting} to show how Google search data can be used to improve short-term forecasts of economic time series.

\paragraph{ML} Two types of ML methods have been applied. (1): Forecasting with artificial neural networks (ANN) has received increasing interest in the late 1990s and has been given special attention in epidemiological forecasting (\cite{philemon2019review}). One-layer network has been applied here to conduct time series forecasting. (2): An autoregressive neural network (ARNN(p,k)) is a modification to the simple ANN using p-lagged inputs of the time series and k number of hidden neurons in the architecture of a simple feedforward neural network (\cite{faraway1998time}).

\paragraph{Hybrid} Hybrid methods model the linear part of the series using linear methods such as ARIMA, while the residual portion is fitted using non-linear methods such as ANN or ARNN.

\paragraph{Ensemble} Ensemble methods directly use weighted averages among different forecasts.

\end{appendices}

\end{document}